\documentclass[11pt]{article}
\usepackage{amsmath}
\usepackage{graphicx}
\usepackage{centernot}
\usepackage{natbib}
\usepackage{url} 
\usepackage[shortlabels]{enumitem}
\usepackage{latexsym, epsfig, amssymb, amsthm, mathrsfs, bbm, enumitem}
\usepackage[OT1]{fontenc}
\usepackage{xcolor}
\usepackage[colorlinks,citecolor=blue,linkcolor=blue,urlcolor=blue]{hyperref}
\usepackage{multirow,multicol,booktabs}
\usepackage{anysize}

\marginsize{1.1in}{1.1in}{0.55in}{0.8in} %

\makeatletter

\usepackage{tikz}
\usepackage{float}
\usepackage{booktabs}
\usepackage{adjustbox}

\usepackage{tabularx}
\usepackage{multirow}
\usepackage{algpseudocode}
\usepackage{algorithm}

\newtheorem{condition}{Condition}
\newtheorem{remark}{Remark}

\def\Xi{X^{(i)}}

\newcommand{\ba}{\mbox{\bf a}}

\newcommand{\br}{\mbox{\bf r}}
\newcommand{\bs}{\mbox{\bf s}}
\newcommand{\bbe}{\mbox{\bf e}}
\newcommand{\bu}{\mbox{\bf u}}

\newcommand{\bx}{\mbox{\bf x}}

\newcommand{\bz}{\mbox{\bf z}}
\newcommand{\bA}{\mbox{\bf A}}

\newcommand{\bD}{\mbox{\bf D}}

\newcommand{\bI}{\mbox{\bf I}}

\newcommand{\bQ}{\mbox{\bf Q}}

\newcommand{\bU}{\mbox{\bf U}}
\newcommand{\bV}{\mbox{\bf V}}
\newcommand{\bW}{\mbox{\bf W}}
\newcommand{\bX}{\mbox{\bf X}}

\newcommand{\bY}{\mbox{\bf Y}}
\newcommand{\bZ}{\mbox{\bf Z}}

\newcommand{\bzero}{\mbox{\bf 0}}

\newcommand{\balpha}{\mbox{\boldmath $\alpha$}}
\newcommand{\bbeta}{\mbox{\boldmath $\beta$}}
\newcommand{\brho}{\mbox{\boldmath $\rho$}}
\newcommand{\bnu}{\mbox{\boldmath $\nu$}}

\newcommand{\bgamma}{\mbox{\boldmath $\gamma$}}
\newcommand{\bPsi}{\mbox{\boldmath $\Psi$}}
\newcommand{\bPi}{\mbox{\boldmath $\Pi$}}
\newcommand{\bpsi}{\mbox{\boldmath $\psi$}}
\newcommand{\bphi}{\mbox{\boldmath $\phi$}}
\newcommand{\bet}{\mbox{\boldmath $\eta$}}
\newcommand{\bmeta}{\mbox{\boldmath $\eta$}}

\newcommand{\bmu}{\mbox{\boldmath $\mu$}}
\newcommand{\bzeta}{\mbox{\boldmath $\zeta$}}

\newcommand{\bSigma}{\mbox{\boldmath $\Sigma$}}

\newcommand{\bmathcalV}{\mbox{\boldmath $\mathcal{V}$}}

\newcommand{\var}{\mathrm{var}}

\newcommand{\diag}{\mathrm{diag}}

\newtheorem{theo}{\sc Theorem}

\newtheorem{prop}{Proposition}
\newtheorem{coll}{\sc Corollary}

\renewcommand{\theequation}{{\arabic{section}.\arabic{equation}}}

\begin{document}

\title{Robust Subgroup Analysis for Heterogeneous Censored Data%
\thanks{This work was accepted for publication in \href{https://onlinelibrary.wiley.com/journal/20491573}{\textit{Stat}} on April 30, 2026. 
The final published version is available at \url{https://doi.org/10.1002/sta4.70158}.
Zhaohui Xu is Ph.D. candidate, International Institute of Finance, School of
Management, University of Science and
Technology of China, Hefei, 230026, China
(E-mail: xzh2517@mail.ustc.edu.cn). %
Daoji Li is Associate Professor, Department of Information Systems and Decision Sciences, College of Business and Economics, California State
University, Fullerton, CA, 92831, United States (E-mail: dali@fullerton.edu). %
Zemin Zheng is Professor, International Institute of Finance, School of
Management, University of Science and
Technology of China, Hefei, 230026, China (E-mail: zhengzm@ustc.edu.cn). %
}
\date{}
\date{}
\author{Zhaohui Xu$^1$, Daoji Li$^2$ and Zemin Zheng$^1$
\medskip\\
$^1$California State University, Fullerton
\\ $^2$
University of Science and
Technology of China
\\
} %
}

\maketitle

\begin{abstract}	
Subgroup analysis is important in practice because real-world data typically come from heterogeneous populations, where meaningful patterns can differ substantially across subpopulations. Correctly identifying these subgroups can improve prediction accuracy, prevent biased or misleading conclusions, and support more effective, targeted decision-making.  
While most existing subgroup analysis methods are developed for complete data, in this paper we propose a novel and robust approach for censored data under heterogeneous accelerated failure time (AFT) models.  Specifically, we combine inverse probability weighting, M-estimation, and concave pairwise fusion penalization to simultaneously identify subgroups and estimate covariate effects for heterogeneous censored data, without requiring prior knowledge of individual subgroup memberships. 
We further develop an efficient RISA-ADMM algorithm to implement the method and establish its convergence. Furthermore, we derive the theoretical properties of the proposed estimators under mild regularity conditions. 
Extensive simulations and an application to the German credit dataset demonstrate the robustness and effectiveness of our approach.
\end{abstract}
	
\textit{Running title}: RISA

\textit{Key words}: Accelerated failure time model,  Censored outcomes, Fusion penalization, Heterogeneity,  Inverse probability weighting, Subgroup identification

\newpage
\section{Introduction}\label{sec:Intro}

In many applications, data are inherently heterogeneous, meaning that the effects of covariates can vary across distinct subpopulations. For example, in risk management, the effect of income on default risk may differ by region, reducing risk more in low-cost areas than in high-cost metropolitan areas. 
In public health, the effectiveness of a vaccination program may vary by age group, with children and older adults responding differently to the same vaccine. In marketing, the impact of a promotional campaign may differ across customer segments, with younger consumers more likely to respond to online advertisements than older consumers.  Ignoring such heterogeneity can lead to biased estimates and misleading conclusions~\citep{pei2024latent}.  

Over the past decades, various subgroup analysis methods have been developed.   Broadly speaking, existing approaches for subgroup analysis in the literature can be categorized into two classes. The first class comprises mixture model–based approaches,
including Gaussian mixture models \citep{wei2013latent}, low-rank methods for mixtures of linear regressions \citep{chaganty2013spectral}, and logistic-normal mixture models \citep{shen2015inference}.   
However,  these methods typically require specifying the number of subgroups and assuming a particular data distribution, both of which are often unknown or difficult to justify in practice. The second class consists of fusion penalty–based approaches. For example, 
\cite{ma2017concave} 
proposed a pairwise fusion penalization approach for linear regression models that automatically identifies subgroups and partitions observations accordingly, without requiring prior knowledge of the number of subgroups. The central idea of~\cite{ma2017concave} is to represent subgroup heterogeneity through group-specific intercepts, after accounting for covariate effects, and to enforce subgroup fusion by applying concave penalties, such as SCAD~\citep{fan2001variable}, 
to the pairwise differences among these intercepts.  Following this idea, the 
fusion penalty–based approach in~\cite{ma2017concave} has been extended to different model settings, including Poisson regression \citep{chen2019subgroup}, quantile regression \citep{zhang2019robust, lu2021multiply},  
additive partially linear models \citep{liu2019subgroup, cai2024subgroup}, functional linear regression~\citep{li2021clusterwise, zhang2022subgroup}, functional partially linear regression~\citep{ma2023subgroup, wang2023latent}, panel data models~\citep{wang2022group, wang2024homogeneity}, and multivariate response regression~\citep{wu2026simultaneous}.  However, the aforementioned methods mainly focus on complete data.

To accommodate censored data, \cite{yan2021subgroup} extended the pairwise fusion penalization framework in~\cite{ma2017concave} to accelerated failure time (AFT) models. In particular, they employed imputation to handle censored outcomes and assumed sub-Gaussian errors to establish theoretical guarantees. However, subgroup structures may be distorted due to incorrect imputation. Moreover, in many applications, such as finance and biomedical studies, survival times are often contaminated by 
heavy-tailed noise, which violates the error assumption, substantially degrades estimation accuracy, and compromises subgroup identification. Motivated by these considerations, we propose a new and robust subgroup analysis approach that preserves subgroup-specific structures, is resilient to heavy-tailed errors, 
and provides theoretical guarantees under mild conditions. Specifically, under the AFT model, we handle censoring directly via inverse probability weighting, capture heterogeneity using pairwise fusion penalization, and achieve robust estimation through  
M-estimation.

Besides using different strategies for handling censored data,  our work differs from~\cite{yan2021subgroup} in both methodology and theory, as highlighted in the following key aspects: (a) our method only requires estimating the survival function of the censoring time, whereas \cite{yan2021subgroup} requires estimating the full error distribution; (b) unlike \cite{yan2021subgroup}, which updates estimators using the Buckley–James iterative procedure with the ADMM algorithm, we adopt iterative penalized reweighted least squares (IPRLS);  (c) \cite{yan2021subgroup} imposes a sub-Gaussian assumption on the error distribution, while our method can accommodate heavy-tailed errors without this assumption; and (d) \cite{yan2021subgroup} establishes theoretical results under fixed-dimensional settings, whereas our framework allows the covariate dimensions to grow with the sample size under suitable rate conditions.

Our main contributions are fourfold. First, we propose a robust subgroup analysis approach for heterogeneous censored data that accommodates heavy-tailed errors by leveraging M-estimation.  Our method relaxes the sub-Gaussian assumption on the error distribution in \cite{yan2021subgroup} and is more robust to outliers and heavy-tailed errors when identifying the number of subgroups. 
Second, we directly account for censored observations using inverse probability weighting. Compared with the imputation-based approach in \cite{yan2021subgroup}, our method avoids potential distortion of subgroup structures due to incorrect imputation and is easier to implement, as it only requires estimating the survival function of the censoring time rather than the full error distribution. Third, our method is highly flexible, employing iterative penalized reweighted least squares (IPRLS) to update the estimators in each iteration. This implementation differs substantially from those in~\cite{ma2017concave} and \cite{yan2021subgroup}. While we illustrate the method using Huber loss, the algorithm can be readily extended to other loss functions.  Fourth, we extend the fusion-penalty framework for subgroup identification to censored data in the diverging-dimensional setting, where the covariate dimensions $p$ and $q$ can grow with the sample size $n$. In contrast, \cite{yan2021subgroup} establishes asymptotic results only under fixed-dimensional settings. Our method substantially broadens the applicability of the fusion-penalty framework to a broader range of practical scenarios.

The rest of this paper is organized as follows. Section~\ref{section2} introduces the problem setup and the proposed methodology. Section~\ref{section3} presents a new algorithm for solving the resulting optimization problem. Section~\ref{section4} establishes the theoretical properties of the proposed estimators. Sections~\ref{section5} and~\ref{section6} report simulation studies and a real data analysis, respectively. Section~\ref{section7} concludes with discussions.  
All proofs and technical details are provided in the Supplementary Material.

{\bf Notations.} For a vector $\bx=(x_1,\cdots,x_p)^{\top}$, 
let $\|\bx\|_0 = |\{j: x_j \neq 0\}|$, $\|\bx\|_1=\sum_{j=1}^p|x_j|$, $\|\bx\|_2=\|\bx\|=\sqrt{\sum_{j=1}^p|x_j|^2}$, and $\|\bx\|_{\infty}=\max_{1\le i\le p}|x_i|$. For a matrix $\bA=(a_{ij})\in \mathbb{R}^{p\times q}$, 
let $\|\bA\|_1=\max_{1\le j\le q}\sum_{i=1}^p|a_{ij}|$,  $\|\bA\|_2=\|\bA\|= \Lambda_{\max}(\bA^{\top}\bA)^{1/2}$, and $\|\bA\|_{\infty}=\max_{1\le i\le p}\sum_{j=1}^q|a_{ij}|$, where $\Lambda_{\min}(\bA)$ and $\Lambda_{\max}(\bA)$ are the minimum and  maximum eigenvalues of a given matrix $\bA$, respectively. 
For two positive sequences $\{a_n\}_{n\ge 1}$ and $\{b_n\}_{n\ge 1}$, 
we write $a_n\gg b_n$ if $a_n^{-1}b_n=o(1)$.
For a function $f(t)$, let $\dot{f}(t)$ and $\ddot{f}(t)$ denote the first- and second-order derivatives of $f(t)$, respectively.

\section{Model and Method}\label{section2}
\subsection{Heterogeneous AFT models}

Let $T_i$ denote the failure time of subject $i$. In the context of credit risk, this corresponds to the default time, defined as the period from borrowing to default.
The classical AFT model assumes that
\begin{align}\label{eq: classical-AFT}
	\log (T_i)=  \mu + \bx_i^{\top}\bbeta + \varepsilon_i,\quad \,\, i=1, 2, \cdots, n,
\end{align}
where $\mu$ is the intercept, $\bx_i=(x_{i1}, \cdots, x_{ip})^{\top}\in\mathbb{R}^p$ is the covariate vector for subject $i$, $\bbeta=(\beta_1, \cdots, \beta_p)^{\top}\in\mathbb{R}^p$ is the unknown regression coefficient vector, and $\varepsilon_i$ is the error term independent of  $\bx_i$. 
As shown in~\eqref{eq: classical-AFT}, the classical AFT model assumes that all covariate effects are the same for every subject in the population, which may not hold in practice.
For example, In credit risk management, \cite{pei2024latent} observed that 
some covariate effects may vary across subgroups.

Therefore, we consider the following heterogeneous AFT model 
\begin{align}\label{eq: Heterogeneous-AFT}
	\log (T_i)=  \bz_i^{\top}\balpha_i + \bx_i^{\top}\bbeta + \varepsilon_i,\quad \,\, i=1, 2, \cdots, n,
\end{align}
where $\bz_i$ is a $q$-dimensional covariate vector for the $i$th subject contributing to the heterogeneity after adjusting for the effects of a set of covariates $\bx_i$, $\balpha_i=(\alpha_1,\cdots,\alpha_q)\in\mathbb{R}^q$ is subject-specific effect of $\bz_i$, $\bbeta=(\beta_1, \cdots, \beta_p)^{\top}\in\mathbb{R}^p$ is the covariate effects of $\bx_i$, and $\varepsilon_i$ is the error term independent of $\bz_i$ and $\bx_i$. 
We assume that all subjects are from $K$ different subgroups $G_1, \cdots, G_K$ with $K\geq 1$, where $K$ is much smaller than the sample size $n$, $G=(G_1, \cdots, G_K)$ is a partition of $\{1, \cdots, n\}$, and subjects within the same group share the same covariate effects for $\bz_i$.  In other words, we have $\balpha_i = \brho_k$ for all $i \in G_k$, $k = 1, \dots, K$, where $\brho_k$ is the common covariate effects of $\bz_i$ in subgroup $k$. Both the number of subgroups $K$ and the subgroup structure $G$ are unknown.

In practice, 
the failure time $T_i$ is often subject to right censoring. Consequently, we observe only $\left\{(T_i^{*}, \bx_i^{\top}, \bz_i^{\top}, \Delta_i)\right\}_{i=1}^n$, where $T_i^{*}=\min\{T_i, C_i\}$ is the observed time, $C_i$ is the censoring time, and $\Delta_i=I(T_i\leq C_i)$ is the censoring indicator. Our goal is to simultaneously determine the number of subgroups $K$, identify these subgroups, and estimate the group-specific coefficients $\brho_{1}, \cdots, \brho_{K}$, while  accurately estimating $\bbeta$, as $\bbeta$ governs how effectively the effects of the covariates $\bx_i$ are controlled and, in turn, influences the accuracy of the subgroup analysis.  To this end, we propose a  novel and robust approach  that
simultaneously determines the number of subgroups $K$, uncovers the subgroup structure $G$, and estimates 
both covariate effects $\balpha_i$ and $\bbeta$.

\subsection{Our Method}
\setcounter{equation}{0}

In this subsection, we introduce our robust subgroup analysis method that integrates inverse probability weighting, M-estimation, and pairwise fusion penalization.
Assume that the censoring time $C$ is a continuous random variable with survival function $S(t) = P(C > t)$.
Denote by $\widehat{S}(t)$ the Kaplan-Meier estimator of $S(t)$ based on 
$\left\{(T_i^{*}, \Delta_i)\right\}_{i=1}^n$.
Write $\balpha=(\balpha_1^{\top}, \cdots, \balpha_n^{\top})^{\top}\in\mathbb{R}^{nq}$. Motivated by the M-estimation approach in \cite{huber1973robust}, 
we consider the following penalized objective function
\begin{align}\label{eq: IPW-obj-fun}
	Q_n(\balpha, \bbeta; \lambda)
	=\sum_{i=1}^n\frac{\Delta_i}{\widehat{S}(T_i^{*})}\rho\left(Y_i^{*}-\bz_i^{\top}\balpha_i-\bx_i^{\top}\bbeta\right) 
	+ \sum_{1\leq i< j\leq n}p(\|\balpha_i - \balpha_j\|, \lambda),
\end{align}
where $Y_i^{*}=\min\{\log(T_i), \,\log(C_i)\}$, $\rho(\cdot)$ is a known convex loss function, $p(\cdot, \lambda)$ is a concave penalty function with a tuning parameter $\lambda\geq 0$. 
Among the various loss functions used for robust M-estimation, the Huber loss provides a desirable balance between robustness and efficiency. Thus, 
in this paper, we adopt the Huber loss function 
\begin{align*}
	\rho(x) =
	\left\{\begin{array}{ll}
		x^2 /2 ,    & \quad\mbox{if } |x | \leq \tau ,  \\
		\tau |x | -  \tau^2 /2 ,   & \quad \mbox{if }  |x | > \tau ,
	\end{array}  \right. 	
\end{align*}
where $\tau>0$ is the robustification parameter.
As noted by~\citet{ma2017concave}, the $L_1$ penalty $p(t,\lambda)=\lambda|t|$ may fail to accurately recover subgroup structures because it applies the same thresholding to all pairwise differences $\|\boldsymbol{\alpha}_i-\boldsymbol{\alpha}_j\|$, which can result in biased estimation.
Therefore, we adopt 
the SCAD penalty~\citep{fan2001variable}, 
a concave penalty defined as
	$p(t, \lambda)=\lambda\int_0^t\min\{1, (a-u/\lambda)_{+}/(a-1)\}\,du$ with $a>2$,
	where $x_{+}=\max\{x, 0\}$ and $a$ 
	controls
	the concavity of the penalty function.
	
	
	Denote by $(\widehat{\balpha} (\lambda), \widehat{\bbeta}(\lambda))$  the minimizer of \eqref{eq: IPW-obj-fun} for a given $\lambda>0$, that is,
	\begin{align}\label{eq: optimization}
		(\widehat{\balpha} (\lambda), \widehat{\bbeta}(\lambda))
		= \underset{\footnotesize \balpha,\, \bbeta}{\arg\min}\, Q_n(\balpha, \bbeta; \lambda).
	\end{align}
	Following~\cite{ma2017concave}, we use the modified BIC to select the penalty parameter $\lambda$ by minimizing
	\begin{align}\label{BIC-lambda}
		\mbox{BIC}(\lambda)= \log\left[n^{-1}\sum_{i=1}^n\frac{\Delta_i}{\widehat{S}(T_i^{*})}\rho\left(Y_i^{*}-\bz_i^{\top}\widehat{\balpha}_i-\bx_i^{\top}\widehat{\bbeta}\right)\right]
		+C_n\frac{\log n}{n}\left(\widehat{K}(\lambda)q+p\right),
	\end{align}
	where $C_n=\log(nq+p)$ is a positive number and $\widehat{K}(\lambda)$ is the estimated number of subgroups.
	Let $\widehat{\lambda}$ denote the value of $\lambda$ selected by the modified BIC criterion in~\eqref{BIC-lambda}.
	Write
	$(\widehat{\balpha}, \widehat{\bbeta})=(\widehat{\balpha} (\widehat{\lambda}), \widehat{\bbeta}(\widehat{\lambda}))$.
	Denote by $\widehat{\brho}_1, \cdots, \widehat{\brho}_{\widehat{K}}$ the distinct values of $\widehat{\balpha}_1, \cdots, \widehat{\balpha}_n.  $
	Then the corresponding subgroups are defined as
	$\widehat{G}_{k}=\{i: \widehat{\balpha}_i=\widehat{\brho}_k, 1\leq i\leq n\}$ for $k=1, \cdots, \widehat{K}$.
		\begin{remark}
			Rank-based methods and weighted estimating equations are two widely used
			approaches for robust estimation under censoring; see, for example, \cite{wei1990linear}, \cite{ying1993large}, \cite{jin2007m}, and \cite{huang2007least}. These approaches provide effective tools for handling heavy-tailed errors and outliers in censored regression models. However, they are mainly developed for homogeneous regression settings and focus on estimating 
			the regression parameter. Relatively few studies have extended such robust methods to subgroup analysis. For example, \cite{fu2021efficient} proposed a rank-based approach for robust estimation and clustering under the accelerated failure time model, but the number of clusters is assumed to be known in advance. In contrast, our approach addresses heterogeneity by simultaneously identifying the unknown number of subgroups and estimating subgroup-specific covariate effects under censoring.
		\end{remark}

\section{Algorithm}\label{section3}
\setcounter{equation}{0}


This section presents our robust inverse-probability-weighted subgroup analysis algorithm, termed RISA-ADMM, for solving the optimization problem \eqref{eq: optimization} and obtaining $(\widehat{\boldsymbol{\alpha}}, \widehat{\boldsymbol{\beta}})$. By introducing a new set of parameters $\bet_{ij}=\balpha_i-\balpha_j$, the optimization problem \eqref{eq: optimization} 
can be reformulated as minimizing
\begin{align*}
	L_0(\balpha,\bbeta,\bet)&=\sum_{i=1}^n\frac{\Delta_i}{\widehat{S}(T_i^{*})}\rho\left(Y_i^{*}-\bz_i^{\top}\balpha_i-\bx_i^{\top}\bbeta\right)
	+ \sum_{1\leq i< j\leq n}p(\|\bet_{ij}\|, \lambda),\\
	& \mbox{subject to} \ \balpha_i-\balpha_j-\bet_{ij}=\bzero,
\end{align*}
where $\bet=\{\bet_{ij}^{\top}, i<j\}^{\top}\in\mathbb{R}^{qn(n-1)/2}$. 
Consequently, the corresponding augmented Lagrangian is
\begin{align*}
	L(\balpha,\bbeta,\bet,\bnu)=L_0(\balpha,\bbeta,\bet)+\sum_{i<j}\langle\bnu_{ij},\balpha_i-\balpha_j-\bet_{ij}\rangle  +\frac{\vartheta}{2}\sum_{i<j}\|\balpha_i-\balpha_j-\bet_{ij}\|^2,
\end{align*}
where the dual variables $\bnu=\{\bnu_{ij}^{\top}, i<j\}^{\top}$ are Lagrangian multipliers and $\vartheta$ is a penalty parameter. 
We then compute the estimators of $(\balpha,\bbeta,\bet,\bnu)$ using the following RISA-ADMM algorithm.
%
	%
	For a given $(\balpha^{(m)},\bbeta^{(m)},\bet^{(m)},\bnu^{(m)})$, which is the value of $(\balpha,\bbeta,\bet,\bnu)$ at iteration $m$, our algorithm proceeds as follows:
	\begin{align}
		\mbox{Step 1}: &\quad  (\balpha^{(m+1)},\bbeta^{(m+1)}) = \underset{\footnotesize\balpha,\,\bbeta}{\arg\min}\, L(\balpha,\bbeta,\bmeta^{(m)},\bnu^{(m)}),\label{IPW-step1}\\
		\mbox{Step 2}: & \quad \bet^{(m+1)} = \underset{\footnotesize\bmeta}{\arg\min}\, L(\balpha^{(m+1)},\bbeta^{(m+1)},\bet,\bnu^{(m)}),\label{IPW-step2}\\
		\mbox{Step 3}:  &\quad  \bnu_{ij}^{(m+1)} =\bnu_{ij}^{(m)}+\vartheta(\balpha_i^{(m+1)}-\balpha_j^{(m+1)}-\bet_{ij}^{(m+1)})\label{IPW-step3}.
	\end{align}
	
	The minimization problem in Step 1 is equivalent to minimizing the following function
	\begin{align}\label{IPW-obj-f}
		f(\balpha,\bbeta)=\sum_{i=1}^n\frac{\Delta_i}{\widehat{S}(T_i^{*})}\rho\left(Y_i^{*}-\bz_i^{\top}\balpha_i-\bx_i^{\top}\bbeta\right)
		+\frac{\vartheta}{2}\|\bA\balpha-\bet^{(m)}+\vartheta^{-1}\bnu^{(m)}\|^2,
	\end{align}
	where $\bA=\bD\otimes\bI_q$, 
	$\bD=\{(\bbe_i-\bbe_j),i<j\}^{\top}\in\mathbb{R}^{(n(n-1)/2)\times n}$, $\bbe_i$ is an $n\times1$ unit vector with $1$ in the 
	$i$th entry and $0$ elsewhere, 
	$\bI_q$ is a $q\times q$ identity matrix, and $\otimes$ is the Kronecker product. 
	Because the objective function $f(\balpha,\bbeta)$ in \eqref{IPW-obj-f} involves the loss function 
	$\rho(\cdot)$, the algorithms in~\cite{ma2017concave} and \cite{yan2021subgroup} are not directly applicable. 
	To address this issue, we borrow ideas from~\cite{cai2021robust} and adopt an iterative penalized reweighted least-squares (IPRLS) method to minimize~\eqref{IPW-obj-f}.
	%
	To proceed, 
	define $\bpsi_i = (\bzero^{\top}, \ldots, \bz_i^{\top}, \ldots, \bx_i^{\top})^{\top}$, a sparse $(nq + p)$-dimensional column vector with (n+1) blocks, where the $i$th block equals $\bz_i$, the last block is $\bx_i$, and all other blocks are zero.
	Write $\bPsi=(\bpsi_1,\cdots,\bpsi_n)^{\top}$ and $\bgamma=(\balpha^{\top},\bbeta^{\top})^{\top}$. Setting the derivative of the objective function \eqref{IPW-obj-f} with respect to $\bgamma$ to zero yields the following equation 
	\begin{align*}
		-\sum_{i=1}^{n}\frac{\Delta_i}{\widehat{S}(T_i^*)}\dot{\rho}(Y_i^*-\bpsi_i^{\top}\bgamma)\bpsi_i^{\top}+\vartheta(\widetilde{\bA}\bgamma-\bet^{(m)}+\vartheta^{-1}\bnu^{(m)})^{\top}\widetilde{\bA}=\bzero,
	\end{align*}
	where 
	$\widetilde{\bA}=(\bA,\bzero)$ is a $[nq(n-1)/2]\times (nq+p)$ matrix. For convenience, let $r_i=Y_i^*-\bpsi_i^{\top}\bgamma$, $\pi_i=\dot{\rho}(r_i)/r_i$, $\widetilde{\pi}_i=[\Delta_i/\widehat{S}(T_i^*)]\pi_i$, and $\widetilde{\bPi}=\diag\{\widetilde{\pi}_1,\cdots, \widetilde{\pi}_n\}$. Then the above equation can be rewritten as 
	\begin{align*}
		- \bPsi^{\top}\widetilde{\bPi}(\bY^{*}-\bPsi\bgamma)
		+\vartheta\widetilde{\bA}^{\top}(\widetilde{\bA}\bgamma-\bet^{(m)}+\vartheta^{-1}\bnu^{(m)})=\bzero,
	\end{align*}
	where $\bY^{*}=(Y_1^{*}, \cdots, Y_n^{*})^{\top}$. This suggests that the estimators of $\balpha$ and $\bbeta$ at  iteration $(m+1)$ can be updated by
	\begin{align}\label{IPW-gamma}
		&({\balpha^{(m+1)}}^{\top}, {\bbeta^{(m+1)}}^{\top})^{\top}
		= \bgamma^{(m+1)} \nonumber\\
		=& \left(\vartheta\widetilde{\bA}^{\top}\widetilde{\bA}+\bPsi^{\top}\widetilde{\bPi}^{(m)}\bPsi\right)^{-1}(\vartheta\widetilde{\bA}^{\top}\bet^{(m)}-\widetilde{\bA}^{\top}\bnu^{(m)}+\bPsi^{\top}\widetilde{\bPi}^{(m)}\bY^{*}),
	\end{align}
	where $\widetilde{\bPi}^{(m)}$ is the value of $\widetilde{\bPi}$ evaluated at $\bgamma^{(m)}=({\balpha^{(m)}}^{\top}, {\bbeta^{(m)}}^{\top})^{\top}$. 
	
	
	In Step 2, after discarding the terms independent of $\bet$, we need to minimize
	\begin{align}\label{IPW-eta}
		\frac{\vartheta}{2}\|\bzeta_{ij}^{(m)}-\bet_{ij}\|^2+p(\|\bet_{ij}\|,\lambda),
	\end{align}
	where $\bzeta_{ij}^{(m)}=\balpha_i^{(m+1)}-\balpha_j^{(m+1)}+\vartheta^{-1}\bnu_{ij}^{(m)}$. 
	For the SCAD penalty with $a>1/\vartheta+1$, the solution of \eqref{IPW-eta} is 
	\begin{equation*}
		\bet_{ij}^{(m+1)} =
		\left\{
		\begin{array}{ll}
			\mbox{ST}(\bzeta_{ij}^{(m)}, \lambda /\vartheta) & \rm{if} \ |\bzeta_{ij}^{(m)}\| \leq \lambda+\lambda/\vartheta, \\
			\frac{\mbox{ST}(\bzeta_{ij}^m,a\lambda/((a-1)\vartheta))}{1-1/((a-1)\vartheta)}&\rm{if}\, \lambda+\lambda/\vartheta<\|\bzeta_{ij}^{(m)}\|\leq a\lambda,\\
			\bzeta_{ij}^{(m)} & \rm{if}\, \|\bzeta_{ij}^{(m)}\| > a \lambda, 
		\end{array}
		\right.
	\end{equation*}
	where $\mbox{ST}(\bu,t)=(1-t/\|\bu\|)_{+}\bu$ is the groupwise soft thresholding operator.
	
	
	In Step 3, we update $\bnu_{ij}$ using  \eqref{IPW-step3}. 
	The iterations are terminated
	when the primal residual $\bA\balpha^{(m+1)}-\bet^{(m+1)}$ is sufficiently close to zero, that is, $\|\bA\balpha^{(m+1)}-\bet^{(m+1)}\|\leq \epsilon$ for some small value $\epsilon>0$. Once convergence is reached, subjects $i$ and $j$ with $\widehat{\bet}_{ij}=0$ can be grouped into one subgroup $G_k$. In addition, we can estimate the $k$th subgroup-specific 
	effect using $\widehat{\brho}_k=|\widehat{G}_k|^{-1}\sum_{i\in \widehat{G}_k}\widehat{\balpha}_i$, where $|\widehat{G}_k|$ is the cardinality of $\widehat{G}_k$. The pseudo-code for our RISA-ADMM algorithm is presented in Algorithm \ref{algorithm1}. 
	
	\vspace{4mm}
	
	\begin{algorithm}
		\caption{\enskip RISA-ADMM}\label{algorithm1}
		\begin{algorithmic}[1]
			\State \textbf{Input:} 
			(1) $\bz_i$, $\bx_i$, $T_i^*$, $\Delta_i$ for $i=1,\dots,n$. 
			(2) constants $a$, $\vartheta$, $\tau$, termination accuracy $\epsilon$, and tuning parameter $\lambda$.
			\State Compute matrices $\widetilde{\bA}$, $\bPsi$, and Kaplan-Meier estimators $\widehat{S}(T_i^*)$ for all $i$.
			\State Let the initial matrix $\widetilde{\bPi} = \diag\{\widetilde{\pi}_1, \dots, \widetilde{\pi}_n\}$ with $\widetilde{\pi}_i = \Delta_i / \widehat{S}(T_i^*)$.
			\State \textbf{Initialize:} 
			Obtain initial estimates $\balpha^{(0)}$ and $\bbeta^{(0)}$; 
			set $\bet_{ij}^{(0)} = \balpha_i^{(0)} - \balpha_j^{(0)}$ and $\bnu_{ij}^{(0)} = \bzero$ for all $(i,j)$.
			\State Set iteration index $m \gets 0$.
			\Repeat
			\State Update $\widetilde{\bPi}^{(m)}$ as the value of $\widetilde{\bPi}$ evaluated at $\bgamma^{(m)} = \big((\balpha^{(m)})^{\top}, (\bbeta^{(m)})^{\top}\big)^{\top}$.
			\State Update $\balpha^{(m+1)}$ and $\bbeta^{(m+1)}$ according to \eqref{IPW-gamma}.
			\State Compute $\bzeta_{ij}^{(m)} = \balpha_i^{(m+1)} - \balpha_j^{(m+1)} + \vartheta^{-1}\bnu_{ij}^{(m)}$.
			\State Update $\bmeta_{ij}^{(m+1)}$ by minimizing \eqref{IPW-eta} under the specified penalty.
			\State Update $\bnu_{ij}^{(m+1)}$ according to \eqref{IPW-step3}.
			\State $m \gets m + 1$
			\Until{$\|\bA \balpha^{(m)} - \bet^{(m)}\| \le \epsilon$}.
			\State \textbf{Output:} $(\widehat{\balpha}, \widehat{\bbeta}, \widehat{\bmeta}, \widehat{\bnu}) \gets (\balpha^{(m)}, \bbeta^{(m)}, \bmeta^{(m)}, \bnu^{(m)})$.
		\end{algorithmic}
	\end{algorithm}
	
	\vspace{-4mm}
		\begin{remark}
			It is worth mentioning that the proposed RISA-ADMM algorithm is not restricted to the Huber loss. The iterative updates in equation \eqref{IPW-gamma} are derived via the IPRLS framework, which only requires the first derivative of the loss function to construct the corresponding weights. Therefore, any convex or nonconvex robust loss function with a well-defined score function, such as Tukey’s bisquare loss or other M-estimators, can be incorporated into the proposed framework without modifying the ADMM structure. The Huber loss is adopted in our numerical studies for its simplicity and well-known efficiency-robustness tradeoff, but the proposed methodology itself is more general.
		\end{remark}

	To obtain the initinal value
	$(\balpha^{(0)},  \bbeta^{(0)})$,
	we consider the ridge fusion 
	given by
	\begin{align*}
		L_R(\balpha,\bbeta)=\sum_{i=1}^n\frac{\Delta_i}{\widehat{S}(T_i^{*})}\rho\left(Y_i^{*}-\bz_i^{\top}\balpha_i-\bx_i^{\top}\bbeta\right)+\frac{\lambda^*}{2}\sum_{i<j}\|\balpha_i-\balpha_j\|^2,
	\end{align*}
	where $\lambda^*$ is a small tuning parameter. 
	Using arguments analogous to those in~\citet{cai2021robust},
	we obtain 
	\begin{align}\label{choice}
		(\balpha_R(\lambda^*)^{\top}, \bbeta_R(\lambda^*)^{\top})^{\top}=\bgamma_R(\lambda^*)=(\lambda^*\widetilde{\bA}^{\top}\widetilde{\bA}+\bPsi^{\top}\widetilde{\bPi^*}\bPsi)^{-1}\bPsi^{\top}\widetilde{\bPi^*}\bY^*,
	\end{align}
	where $\widetilde{\pi}^*_i=\Delta_i/\widehat{S}(T_i^*)$, and $\widetilde{\bPi}^*=\diag\{\widetilde{\pi}^*_1,\ldots, \widetilde{\pi}^*_n\}$. 
	We set $\lambda^*=0.001$ in our numerical studies.
	Based on $\balpha_R(\lambda^*)$ obtained from \eqref{choice}, we assign the subjects into $K^*$ preliminary groups according to the ranking of the median values of $\balpha_{R,\,i}(\lambda^*)$. Following \citet{ma2020exploration}, we set $K^*= \lfloor n^{1/2} \rfloor$ to ensure that the number of groups is sufficiently large, where $\lfloor a \rfloor$ denotes the greatest integer not exceeding $a$.
	We then compute the initial estimate $(\balpha^{(0)}, \bbeta^{(0)})$ via least squares regression using the inverse probability weighted M-estimation within the $K^*$ groups and take
	$\bmeta_{ij}^{(0)} = \balpha_i^{(0)} - \balpha_j^{(0)}$ and $\bnu_{ij}^{(0)} = \bzero$. 
	The following results guarantee the convergence of our RISA-ADMM algorithm.  
	
	\begin{prop}\label{proposition1}
		Let $\br^{(m+1)}=\bA\balpha^{(m+1)}-\bet^{(m+1)}$ and $\bs^{(m+1)}=\vartheta\bA^{\top}(\bet^{(m+1)}-\bet^{(m)})$ be the primal residual and the dual residual in the algorithm described above, respectively. It holds that $\lim\limits_{m\to \infty}\|\br^{(m+1)}\|=0$ and $\lim\limits_{m\to \infty}\|\bs^{(m+1)}\|=0$ 
		for the SCAD penalty.
	\end{prop}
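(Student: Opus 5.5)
The approach follows the standard argument for ADMM with a concave fusion penalty, as in \cite{ma2017concave}. It proceeds through the augmented Lagrangian evaluated along the iterates, $L_m := L(\balpha^{(m)},\bbeta^{(m)},\bet^{(m)},\bnu^{(m)})$.

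\textbf{Step 1: the augmented Lagrangian is bounded below along the iterates.} Write $f(\balpha,\bbeta)$ for the weighted Huber loss. It is nonnegative and the SCAD penalty is nonnegative. Completing the square gives
\[
\langle\bnu_{ij},\bu\rangle+\tfrac{\vartheta}{2}\|\bu\|^2\ge -\|\bnu_{ij}\|^2/(2\vartheta),
\qquad \bu=\balpha_i-\balpha_j-\bet_{ij},
\]
so it suffices to control the dual variables. For this I will use the optimality condition of Step 2. The $\bet$-subproblem decouples over pairs $(i,j)$ into~\eqref{IPW-eta}. Its first-order condition, combined with Step 3, yields
\[
\bnu_{ij}^{(m+1)}\in\partial_{\bet_{ij}}p(\|\bet_{ij}^{(m+1)}\|,\lambda).
\]
Since the SCAD derivative is bounded by $\lambda$, we get $\|\bnu_{ij}^{(m)}\|\le\lambda$ for all $m\ge1$, and $L_m$ is bounded below.

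\textbf{Step 2: sufficient decrease.} Each primal step decreases $L$. For Step 2 this holds exactly, because each pair problem is the global minimizer of~\eqref{IPW-eta}. The condition $a>1/\vartheta+1$ makes $\tfrac{\vartheta}{2}\|\cdot\|^2+p$ strongly convex with modulus $\vartheta-1/(a-1)>0$, which gives
\[
L(\cdot,\bet^{(m)},\cdot)-L(\cdot,\bet^{(m+1)},\cdot)\ge c\,\|\bet^{(m+1)}-\bet^{(m)}\|^2 .
\]
For Step 1 the IPRLS update~\eqref{IPW-gamma} is a majorize--minimize step. The Huber loss satisfies $\rho(r)\le \rho(r_0)+\tfrac{\pi_0}{2}(r^2-r_0^2)$ with $\pi_0=\dot\rho(r_0)/r_0$. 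Hence minimizing the reweighted quadratic does not increase $f+\tfrac{\vartheta}{2}\|\cdot\|^2$, and it decreases it by at least $\tfrac{\vartheta}{2}\|\bA(\balpha^{(m+1)}-\balpha^{(m)})\|^2$.

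The dual update raises $L$ by $\vartheta\|\br^{(m+1)}\|^2=\vartheta^{-1}\|\bnu^{(m+1)}-\bnu^{(m)}\|^2$. This increase must be absorbed by the $\bet$-decrease, using that $\bnu^{(m+1)}$ is a (Lipschitz, slope at most $\lambda/(a-1)$ beyond the kink) function of $\bet^{(m+1)}$ through the SCAD derivative. This gives
\[
\|\bnu^{(m+1)}-\bnu^{(m)}\|\le \tfrac{1}{a-1}\|\bet^{(m+1)}-\bet^{(m)}\| ,
\]
at least once iterates leave the nondifferentiable point $\bzero$; near zero, the soft-threshold region has to be treated separately. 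For $\vartheta$ large enough relative to $1/(a-1)$, we obtain
\[
L_m-L_{m+1}\ge c'\big(\|\bet^{(m+1)}-\bet^{(m)}\|^2+\|\br^{(m+1)}\|^2\big).
\]
This is the step I expect to be the main obstacle. The nonconvex penalty and the kink at zero break the usual convex ADMM telescoping. If the Lipschitz bound fails near zero, my fallback is to argue as in \cite{ma2017concave}: there the bounded dual sequence and the monotone, bounded $L_m$ are used directly, and the limit of $L_m$ is shown to equal $\lim f+\sum p$.

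\textbf{Step 3: conclusion.} Since $L_m$ is nonincreasing and bounded below, it converges. Summing the decrease inequality shows $\sum_m\|\br^{(m+1)}\|^2<\infty$ and $\sum_m\|\bet^{(m+1)}-\bet^{(m)}\|^2<\infty$. Therefore $\|\br^{(m+1)}\|\to0$. Moreover
\[
\|\bs^{(m+1)}\|\le\vartheta\|\bA\|\,\|\bet^{(m+1)}-\bet^{(m)}\|\to0,
\]
which proves the proposition.
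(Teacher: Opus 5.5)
\textbf{The gap is in your Step~2.} Your Lyapunov argument rests on the inequality $\|\bnu^{(m+1)}-\bnu^{(m)}\|\le\frac{1}{a-1}\|\bet^{(m+1)}-\bet^{(m)}\|$. This inequality fails in exactly the regime the algorithm is designed to reach; it is not a removable technicality at the kink. Take a pair that stays fused, $\bet_{ij}^{(m)}=\bet_{ij}^{(m+1)}=\bzero$. The right-hand side is then zero. By Step~3, however, $\bnu_{ij}^{(m+1)}-\bnu_{ij}^{(m)}=\vartheta(\balpha_i^{(m+1)}-\balpha_j^{(m+1)})$. The thresholding rule returns $\bzero$ whenever $\|\bnu_{ij}^{(m)}+\vartheta(\balpha_i^{(m+1)}-\balpha_j^{(m+1)})\|\le\lambda$, so this change can be as large as $2\lambda$. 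At $\bet_{ij}=\bzero$ your inclusion $\bnu_{ij}\in\partial p$ only gives $\|\bnu_{ij}\|\le\lambda$; the dual is not a function of $\bet$ there at all. Off zero there is no uniform constant either:
\begin{itemize}
\item On $0<\|\bet\|\le\lambda$ the gradient is $\lambda\bet/\|\bet\|$. Two iterates $\pm\epsilon\bu$ with $\|\bu\|=1$ give a dual jump of $2\lambda$ for a $\bet$-change of $2\epsilon$.
\item For $q\ge2$, on $\lambda<\|\bet\|<a\lambda$ the tangential curvature $\dot p(t)/t$ approaches $1$, not $1/(a-1)$.
\end{itemize}

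Fused within-group pairs are the generic situation, so the increase $\vartheta\|\br^{(m+1)}\|^2$ produced by the dual step has nothing to absorb it. The $(\balpha,\bbeta)$-step only supplies $\frac{\vartheta}{2}\|\bA(\balpha^{(m+1)}-\balpha^{(m)})\|^2$, which controls increments, not the residual $\bA\balpha^{(m+1)}-\bet^{(m+1)}$ itself. Consequently you have no sufficient decrease, $L_m$ is not shown to be monotone, and neither $\sum_m\|\br^{(m+1)}\|^2<\infty$ nor $\|\bet^{(m+1)}-\bet^{(m)}\|\to0$ follows. Your fallback is circular, because it presupposes a ``monotone, bounded $L_m$''. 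Requiring ``$\vartheta$ large enough'' also adds a hypothesis the proposition does not have. With the paper's $\vartheta=1$, $a=3$, even your scalar bound gives only $\tfrac{1}{2}(\vartheta-\tfrac{1}{a-1})=\tfrac{1}{4}=\tfrac{1}{\vartheta(a-1)^2}$, i.e., no strict decrease.

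\textbf{How the paper avoids this.} The paper's proof never needs a descent inequality.
\begin{itemize}
\item It uses the $\bet$-minimization only to get $L(\balpha^{(m+1)},\bbeta^{(m+1)},\bet^{(m+1)},\bnu^{(m)})\le G^{(m+1)}$. Here $G^{(m+1)}$ is the penalized objective at $\bet=\bA\balpha^{(m+1)}$, where the multiplier and quadratic terms vanish.
\item It takes a limit point $(\balpha^*,\bbeta^*,\bet^*)$ via Theorem~4.1 of \cite{tseng2001convergence}.
\item It unrolls the dual, $\bnu^{(m+t-1)}=\bnu^{(m)}+\vartheta\sum_{i=1}^{t-1}\br^{(m+i)}$. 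The limiting Lagrangian then contains $(t-\tfrac{1}{2})\vartheta\|\bA\balpha^*-\bet^*\|^2$ yet stays below $G^*$ for every $t$, which forces $\bA\balpha^*=\bet^*$.
\item The dual residual comes from the $\balpha$-stationarity of Step~1, rewritten as $\bs^{(m+1)}=-\bZ^{\top}\brho(\balpha^{(m+1)},\bbeta^{(m)})-\bA^{\top}\bnu^{(m+1)}$ and passed to the limit. It does not come from summability of $\bet$-increments.
\end{itemize}
Your Step~1 bound $\|\bnu_{ij}^{(m)}\|\le\lambda$ is correct and worth keeping. It is exactly what keeps the term $\lim_m\bnu^{(m)\top}(\bA\balpha^*-\bet^*)$ in that argument finite.
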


	\section{Theoretical results}\label{section4}
	\setcounter{equation}{0}

	To facilitate our theoretical analysis, we first present some notation and conditions.
	Denote by $\widetilde{\bW}=(w_{ik})$ the $n\times K$ matrix with $w_{ik}=1$ for $i\in G_k$ and $w_{ik}=0$ otherwise. Write $\bW=\widetilde{\bW}\otimes \bI_q$. Let $\Omega=\{\balpha\in \mathbb{R}^{nq}:\balpha_i=\balpha_j$, for any $i,j\in G_k, 1\le k\le K\}$. Define $\bphi=(\bbeta^{\top},\brho^{\top})^{\top}$ and $\brho=(\brho_1^{\top},\dots,\brho_K^{\top})^{\top}$, where $\brho_k$ is the $k$th subgroup-specific vector. 
	Let $\bphi_0=(\bbeta_0^{\top},\brho_0^{\top})^{\top}$ be the true parameter vector of $\bphi$.
	Denote by $|G_k|$ the number of elements in $G_k$. Define  $G_{\min}=\min_{1\le k\le K}|G_k|$ and $G_{\max}=\max_{1\le k\le K}|G_k|$. Let $\bX=(\bx_1,\dots,\bx_n)^{\top}$ and $\bZ=\diag(\bz_1^{\top},\dots,\bz_n^{\top})$. Let $\bU=(\bX,\bZ\bW)$ and $\bU_i$ be the $i$th row vector of $\bU$; that is, $\bU_i=(\bx_i^{\top},\bz_i^{\top}w_{i1},\dots,\bz_i^{\top}w_{iK})^{\top}$. Then we define
	\begin{align*}
		&\bV_n=\sum_{i=1}^n\frac{\Delta_i}{S(T_i^{*})}\ddot{\rho}(Y_i^{*}-\bU_i^{\top}\bphi_0)\bU_i\bU_i^{\top} \\
		&\bSigma_n=\sum_{i=1}^{n}\frac{\Delta_i}{S^2(T_i^{*})}\{\dot{\rho}(Y_i^{*}-\bU_i^{\top}\bphi_0)\}^2\bU_i\bU_i^{\top}.
	\end{align*}
	Let $\bmathcalV_n=\mathbb{E}(\bV_n^{-1}\bSigma_n\bV_n^{-1})$. 
	Corresponding to the decomposition of $\bU$, we decompose $\bmathcalV_n$ as
	\begin{align*}
		\bmathcalV_n=\left(
		{ \begin{array}{cc}
				\bmathcalV_{n11}&\bmathcalV_{n12}\\
				\bmathcalV_{n21}&\bmathcalV_{n22}\\
		\end{array} }
		\right),
	\end{align*}
	where $\bmathcalV_{n11}$ is a $p\times p$ matrix. 

	
	\begin{condition}\label{c1}
		The function $p_{\lambda}(\theta,\lambda)$ is symmetric with respect to $\theta$ and is non-decreasing and concave in $\theta$ on $[0,\infty)$. Define $\varrho(\theta)=\lambda^{-1}p(\theta,\lambda)$. Then there exists a constant $c>0$ such that $\varrho(\theta)$ is constant for all $\theta \ge c\lambda$. Furthermore, $\varrho(0)=0$, the derivative $\dot{\varrho}(\theta)$ exists and is continuous except at  a finite number of values of $\theta$, and $\dot{\varrho}(0+)=1$.
	\end{condition}

	\begin{condition}\label{c2}
		There exists a constant $\nu>0$ such that $\mathbb{P}(C=\nu)>0$ and $\mathbb{P}(C>\nu)=0$.   
	\end{condition} 
	
	\begin{condition}\label{c3}
		The loss function $\rho(\cdot)$ is assumed to have bounded derivatives of first and second order. In particular, its first derivative is continuous and bounded.
		In addition, the errors $\varepsilon_i$ are independent and identically distributed and satisfy $ \mathbb{E}[\dot{\rho}(\varepsilon_i)] = 0$.
	\end{condition}
	
	\begin{condition}\label{c4}
		$\sup_i\|\bx_i\|\le c_1\sqrt{p}$, $\sup_i\|\bz_i\|\le c_2 \sqrt{q}$, $\Lambda_{\min}(\bU^{\top}\bU)\ge c_3 G_{\min}$, and $\Lambda_{\max}(\bU^{\top}\bU)\le c_4 n$ for some positive constants $c_1, c_2, c_3$ and $c_4$. 
	\end{condition}

	
	Condition~\ref{c1} is standard in the subgroup analysis literature~\citep{ma2017concave, hu2021subgroup,yan2021subgroup, wu2026simultaneous} and
	is satisfied by many concave penalty functions, including the SCAD penalty.  
		Condition 2 is a technical assumption that ensures the censoring survival function remains bounded away from zero, thereby preventing extreme inverse probability weights and simplifying asymptotic arguments. This condition is commonly adopted in the censored data literature (e.g.,~\cite{peng2009competing, song2014censored, fang2020joint}) and is often satisfied in clinical studies with administrative censoring. Moreover, as noted by~\cite{peng2009competing}, one may consider a truncated censoring time $C^* = \min(C, L)$, which guarantees that the condition is always satisﬁed. In practice, $L$ should be chosen below the upper bound of the support of $C$ but sufficiently large so that any resulting loss of information is negligible.
	Condition~\ref{c3} is commonly assumed in the literature on M-estimation~\citep{huber1973robust}. 
	In particular, higher-order derivatives are required for Taylor expansions and the assumption 
	$\mathbb{E}[\dot{\rho}(\varepsilon_i)] = 0$
	ensures that 
	$\sum_{i=1}^n\left [\Delta_i/S(T_i)\right]\rho(Y_i - \mathbf{U}_i^\top \boldsymbol{\phi})$
	is minimized at the true parameter value $\boldsymbol{\phi}_0$. Condition~\ref{c4} imposes a mild restriction on the $L_2$-norms of $\mathbf{x}_i$ and $\mathbf{z}_i$, which relaxes the fixed upper-bound assumption on these norms used in \cite{hu2021subgroup} and \cite{yan2021subgroup}.  The assumptions on the matrix $\mathbf{U}$ are similar to those in \cite{ma2020exploration} and \cite{yan2021subgroup}.
	
	Next, we introduce the oracle estimators, 
	assuming that
	the true subgroup structure is known. 
	Although the oracle estimators are not available in practice, they are useful for deriving the theoretical properties of our proposed method.
	Given the true subgroup structure $\bU$, 
	the oracle estimators of $\bbeta$ and $\balpha$
	are defined as
	\begin{align*}
		( \widehat{\balpha}^{or}, \widehat{\bbeta}^{or})
		=\underset{\footnotesize \balpha\in\Omega,\,\bbeta\in \mathbb{R}^{p}}{\arg\min}\,\sum_{i=1}^n\frac{\Delta_i}{\widehat{S}(T_i^{*})}\rho(Y_i^{*}-\bz_i^{\top}\balpha_i-\bx_i^{\top}\bbeta).
	\end{align*}
	Let $\widehat{\brho}^{or}$ be the distinct values of $\widehat{\balpha}^{or}$. The oracle estimator of $\bphi$ is then defined as $\widehat{\bphi}^{or}=\big((\widehat{\bbeta}^{or})^{\top}, (\widehat{\brho}^{or})^{\top}\big)^{\top}$. 
	Equivalently,  the oracle estimator of $\bphi$ is also given by 
	\begin{align*}
		\widehat{\bphi}^{or}=\underset{\footnotesize \bphi\in \mathbb{R}^{p+Kq}}{\arg\min}\,\sum_{i=1}^n\Delta_i/\widehat{S}(T_i^{*})\rho(Y_i^{*}-\bU_i^{\top}\bphi).
	\end{align*}

	
		\begin{theo}\label{theorem1}
			Assume that Conditions~\ref{c1}-\ref{c4} and
			\begin{align*}
						\mathbb{P}\left(\lim_{n\rightarrow \infty} \left\{\inf_{\footnotesize\| \bphi-\bphi_0\|\ge n^{-\gamma}}\|\Phi_n^S(\bphi)\| \cdot\left(n^{1/2+r}\sqrt{Kq+p}\right)^{-1}\right\}=\infty\right)=1  
			\end{align*}
			hold,  where $\Phi_n^S(\bphi)=\sum_{i=1}^n [\Delta_i/S(T_i^{*})]\dot{\rho}(Y_i^{*}-\bU_i^{\top}\bphi)\bU_i$, $\gamma>0$ and $0<r<1/2$.  If $G_{\min}\gg\max\{n^{1-\gamma},n^{1/2+r}\sqrt{Kq+p}\}$, 
			then we have
			\begin{align*}
				&\|\big((\widehat{\bbeta}^{or}-\bbeta_0)^{\top},(\widehat{\brho}^{or}-\brho_0)^{\top}\big)^{\top}\|=o(\xi_{n}),\\
				&	\|\widehat{\boldsymbol{\alpha}}^{or}-\boldsymbol{\alpha}_0\| = o\big(\sqrt{G_{\max}} \, \xi_n\big), \,\,\mbox{and}\,\, 
				\sup_i \|\widehat{\boldsymbol{\alpha}}^{or}_i - \boldsymbol{\alpha}_{0i}\| = o(\xi_n)
			\end{align*}
			almost surely,
			where 
			\begin{align}\label{eq: xin-def}
				\xi_n \;=\; \max\left\{\frac{n^{1-\gamma}}{G_{\min}},\;
				\frac{n^{1/2+r}\sqrt{Kq+p}}{G_{\min}}\right\}.  
			\end{align}
			Moreover, we have 
			\begin{align*} 
				\ba_n^{\top}\bmathcalV_n^{-1/2}
				\big((\widehat{\bbeta}^{or}-\bbeta_0)^{\top},(\widehat{\brho}^{or}-\brho_0)^{\top}\big)^{\top}
				\xrightarrow{d} N(0,1)
			\end{align*}
			for any unit column vector $\ba_n\in\mathbb{R}^{p+Kq}$ as $n\to\infty$.
		\end{theo}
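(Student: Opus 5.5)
The plan is to treat $\widehat{\bphi}^{or}$ as an M-estimator with estimated weights. Define the feasible score
\begin{align*}
\widehat{\Phi}_n(\bphi)=\sum_{i=1}^n [\Delta_i/\widehat{S}(T_i^{*})]\dot{\rho}(Y_i^{*}-\bU_i^{\top}\bphi)\bU_i .
\end{align*}
By convexity of $\rho$, $\widehat{\bphi}^{or}$ solves $\widehat{\Phi}_n(\widehat{\bphi}^{or})=\bzero$.

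The first step is to control the replacement of $S$ by $\widehat S$. Condition~\ref{c2} gives $S(t)\ge S(\nu-)>0$ on the support of $T_i^*$. The Kaplan--Meier estimator is then uniformly strongly consistent on $[0,\nu]$, with the law of the iterated logarithm rate $\sup_{t\le \nu}|\widehat S(t)-S(t)|=O(\sqrt{\log\log n/n})$ almost surely. Since $\dot\rho$ is bounded (Condition~\ref{c3}) and $\|\bU_i\|\le c\sqrt{p+q}\le c\sqrt{Kq+p}$ (Condition~\ref{c4}), we get, uniformly in $\bphi$,
\begin{align*}
\|\widehat{\Phi}_n(\bphi)-\Phi_n^S(\bphi)\|\le \sum_i\Big|\frac{1}{\widehat S(T_i^*)}-\frac{1}{S(T_i^*)}\Big|\,\|\dot\rho\|_\infty\|\bU_i\| = O\big(\sqrt{n\log\log n}\sqrt{Kq+p}\big)=o\big(n^{1/2+r}\sqrt{Kq+p}\big)
\end{align*}
almost surely. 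Combined with the displayed assumption of Theorem~\ref{theorem1}, this gives, almost surely for large $n$,
\begin{align*}
\inf_{\|\bphi-\bphi_0\|\ge n^{-\gamma}}\|\widehat\Phi_n(\bphi)\|>0 .
\end{align*}
Hence the root satisfies $\|\widehat{\bphi}^{or}-\bphi_0\|<n^{-\gamma}$.

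To sharpen this to the stated rate $o(\xi_n)$, I would Taylor expand $\widehat\Phi_n(\widehat{\bphi}^{or})$ around $\bphi_0$:
\begin{align*}
\bzero=\widehat\Phi_n(\bphi_0)-\widehat{\bV}_n(\widetilde\bphi)(\widehat{\bphi}^{or}-\bphi_0).
\end{align*}
The Hessian is bounded below by $c\,\Lambda_{\min}(\bU^\top\bU)\ge c\,c_3G_{\min}$. This uses $\ddot\rho\ge c>0$ near the origin together with the weights being at least $1$ on uncensored points; this is where Condition~\ref{c4} enters. The score term satisfies $\|\Phi^S_n(\bphi_0)\|=o(n^{1/2+r}\sqrt{Kq+p})$ almost surely, because it is a sum of independent mean-zero bounded vectors ($E[\Delta_i/S(T_i^*)\,\dot\rho(\varepsilon_i)\mid\bU_i]=E\dot\rho(\varepsilon_i)=0$). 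The weight-replacement error was bounded above by the same order. The localization from the previous step contributes the $n^{1-\gamma}/G_{\min}$ piece. Dividing by $G_{\min}$ gives $\|\widehat{\bphi}^{or}-\bphi_0\|=o(\xi_n)$.

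The $\balpha$ bounds then follow from $\widehat\balpha^{or}-\balpha_0=\bW(\widehat\brho^{or}-\brho_0)$. First, $\sup_i\|\widehat\balpha^{or}_i-\balpha_{0i}\|\le\|\widehat\brho^{or}-\brho_0\|$. Second, $\|\bW\|^2=\Lambda_{\max}(\widetilde\bW^\top\widetilde\bW)=G_{\max}$, which gives the $\sqrt{G_{\max}}$ factor.

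For asymptotic normality, I would use the same expansion to write
\begin{align*}
\widehat{\bphi}^{or}-\bphi_0=\bV_n^{-1}\widehat\Phi_n(\bphi_0)+\text{remainder}.
\end{align*}
The score $\widehat\Phi_n(\bphi_0)$ decomposes as $\Phi^S_n(\bphi_0)$ plus the KM correction. The KM correction is linearized through the martingale representation $\widehat S/S-1=-\int dM^C/(yS)+o_p(n^{-1/2})$. This produces an additional i.i.d. sum whose variance is absorbed into $\bSigma_n$, or is shown to be asymptotically negligible under the stated definition of $\bmathcalV_n$. For fixed unit $\ba_n$, $\ba_n^\top\bmathcalV_n^{-1/2}\bV_n^{-1}\Phi^S_n(\bphi_0)$ is a sum of independent bounded terms. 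The Lindeberg condition holds because each summand is $O(\sqrt{Kq+p}/G_{\min})$ while the total variance is of order one, so the Lindeberg–Feller CLT applies. The remainder, consisting of the Hessian fluctuation $\bV_n(\widetilde\bphi)-\bV_n$ (controlled by boundedness of $\ddot\rho$ and the rate above), is negligible by Slutsky's theorem.

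The main obstacle is handling the Kaplan--Meier weights. This enters twice: in making the uniform weight-replacement bound compatible with the $n^{1/2+r}\sqrt{Kq+p}$ scaling under diverging dimensions, and in the CLT, where the KM fluctuation contributes at the same $n^{-1/2}$ order as the score. For the CLT I would try the i.i.d. martingale representation of $\widehat S$ first.
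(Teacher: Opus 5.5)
Your plan follows the paper's proof essentially step for step. Both use a uniform almost-sure Kaplan--Meier rate to replace $\widehat S$ by $S$ uniformly in $\bphi$; your LIL rate is a sharper form of the paper's $o(n^{-1/2+r})$. Both then use the divergence assumption to rule out roots outside the $n^{-\gamma}$-ball, a first-order expansion with $\|\bV_n^{-1}\|\lesssim G_{\min}^{-1}$ to reach $o(\xi_n)$, the identity $\|\bW\|^2=G_{\max}$ for $\widehat{\balpha}^{or}$, and Lindeberg--Feller for normality.

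One repair is needed in the rate part. The Hessian lower bound does not follow termwise from ``weights $\ge 1$ on uncensored points and $\ddot\rho\ge c$ near $0$''. Censored points and points with $|r_i|>\tau$ contribute nothing. You need a concentration argument showing that $\bV_n$ is close in operator norm to $\sum_i\mathbb{P}(|\varepsilon_i|\le\tau)\bU_i\bU_i^{\top}$. The paper simply asserts this bound.

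The genuine gap is the normality step, exactly where you flagged it. Neither of your two escape routes works for the $\bmathcalV_n$ in the statement.
\begin{itemize}
\item Linearizing $1/\widehat S-1/S$ through the censoring martingale gives $\widehat{\Phi}_n(\bphi_0)-\Phi_n^S(\bphi_0)\approx\int h_n(u)\,y_n(u)^{-1}\,d\bar M^C(u)$, where $h_n(u)=\sum_i\bU_i\,\mathbb{E}[\dot\rho(\varepsilon_i)I(T_i\ge u)]$ and $y_n(u)=\sum_i\mathbb{P}(T_i^*\ge u)$.
\item Although $\mathbb{E}\dot\rho(\varepsilon_i)=0$, the truncated means $\mathbb{E}[\dot\rho(\varepsilon)I(\varepsilon>c)]$ are strictly positive for interior $c$ when $\dot\rho$ is nondecreasing, as for Huber. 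So this term is generically nondegenerate, of the same $\sqrt n$ order as $\Phi_n^S(\bphi_0)$, and correlated with it. This is the standard fact that estimating the censoring distribution changes (typically reduces) the IPCW variance.
\item The term cannot be ``absorbed into $\bSigma_n$'', which is an explicit sum built from the true $S$ with no such term. It is also not negligible.
\end{itemize}
Your route therefore proves normality only with a corrected sandwich variance, or under an extra assumption that removes the correction (e.g.\ $S$ known).

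The paper's proof does not close this gap either. It places the Kaplan--Meier error in the remainder of $\widehat{\bphi}^{or}-\bphi_0=-\bV_n^{-1}\Phi_n^S(\bphi_0)+o(\xi_n)$ and applies Lindeberg--Feller to the leading term only. But $\xi_n\ge n^{1/2+r}\sqrt{Kq+p}/G_{\min}$ exceeds the $\sqrt n/G_{\min}$ scale of that leading term by a factor $n^{r}\sqrt{Kq+p}$. So the remainder is never shown to vanish after multiplication by $\bmathcalV_n^{-1/2}$.

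The same scale issue undercuts your claim that the Hessian fluctuation is negligible ``by boundedness of $\ddot\rho$''. That requires some smoothness of $\ddot\rho$, which Huber lacks at $\pm\tau$. It also requires a rate condition making the $\xi_n^2$-order remainder $o(n^{-1/2})$; boundedness alone is not enough.
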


		Theorem \ref{theorem1} establishes an estimation error bound for the oracle estimator $(\widehat{\bbeta}^{or}, \widehat{\brho}^{or})$ and  further shows that this oracle estimator is asymptotically normally distributed.   
		Under the condition $G_{\min}\gg\max\{n^{1-\gamma},n^{1/2+r}\sqrt{Kq+p}\}$, it follows from \eqref{eq: xin-def} that the estimation error bound $\xi_n$ converges to zero as $n\to\infty$.
		Consequently, the oracle estimator can achieve consistency with probability approaching one.
		

		\begin{remark}\label{remark3}
			The condition on $\Phi_n^S(\bphi)$ in Theorem~\ref{theorem1}, that is, 
			\begin{align*}
				\mathbb{P}\left(\lim_{n\rightarrow \infty} \left\{\inf_{\footnotesize\| \bphi-\bphi_0\|\ge n^{-\gamma}}\|\Phi_n^S(\bphi)\| \cdot\left(n^{1/2+r}\sqrt{Kq+p}\right)^{-1}\right\}=\infty\right)=1,
			\end{align*}
			ensures that $\Phi_n^S(\boldsymbol{\phi})$ diverges from zero 
			outside a shrinking neighborhood of $\boldsymbol{\phi}_0$.
			Under this condition, any solution to $\Phi_n^S(\boldsymbol{\phi})=0$ is expected to lie in a shrinking neighborhood of $\boldsymbol{\phi}_0$, which facilitates the consistency results established in Theorem~\ref{theorem1}. 
			A similar condition was also adopted in \cite{yan2021subgroup}, who considered the same problem
			but used imputation to handle censored outcomes.
			This condition is satisfied under a convexity assumption on the estimating function $\Phi_n^S(\boldsymbol{\phi})$.
			Specifically, assume 
			that there exists a constant $c>0$ such that
			$\bigl(\Phi_n^S(\boldsymbol{\phi})-\Phi_n^S(\boldsymbol{\phi}_0)\bigr)^\top
			(\boldsymbol{\phi}-\boldsymbol{\phi}_0)
			\ge c n \|\boldsymbol{\phi}-\boldsymbol{\phi}_0\|^2$
			holds for all $\boldsymbol{\phi}$ satisfying $\|\boldsymbol{\phi}-\boldsymbol{\phi}_0\|\ge n^{-\gamma}$. 
			Then, using the Cauchy–Schwarz inequality and the reverse triangle inequality, it is straightforward to show that
			the condition $ \mathbb{P}\left(\lim\limits_{n\to\infty}
			\left\{\inf_{\|\boldsymbol{\phi}-\boldsymbol{\phi}_0\|\ge n^{-\gamma}}
			\|\Phi_n^S(\boldsymbol{\phi})\|
			\cdot \left(n^{1/2+r}\sqrt{Kq+p}\right)^{-1}\right\}
			= \infty \right) = 1$ holds whenever $n^{\frac12-\gamma-r}\gg \sqrt{Kq+p}$.
		\end{remark}

		\begin{remark}\label{remark4}
			Since $G_{\min} \le n/K$ by definition, we may write $G_{\min} = \delta n/K$ for some constant $0<\delta \leq 1$. When the number of subgroups $K$ is fixed, the condition $G_{\min}\gg\max\{n^{1-\gamma},\,n^{1/2+r}\sqrt{Kq+p}\}$ in Theorem~1 implies
			$p=o(n^{1-2r})$ and $q=o(n^{1-2r})$.
			Therefore, both $p$
			and $q$ are allowed to diverge with $n$ when 
			$K$ is fixed.
		\end{remark}


	%
	
	For $K\ge 2$, we define the minimum signal difference between the heterogeneous effects of any two subgroups as
	\begin{align*}
		b_n=\min_{i\in G_k,\, j\in G_{k'},\, k\ne k^{'}}\|\balpha_{0i}-\balpha_{0j}\|=\min_{k\ne k^{'}}\|\brho_{0k}-\brho_{0k^{'}}\|.
	\end{align*}
	Then we have the following result.
	
	%
	
	\begin{theo}\label{theorem2}
		Suppose Conditions~\ref{c1}-\ref{c4} hold. If $b_n>c\lambda$ and $\lambda\gg \xi_n$, where $c>0$ is the constant specified  in Condition~\ref{c1} and $\xi_n$ is given in~\eqref{eq: xin-def}, then there exists a local minimizer $(\widehat{\balpha}(\lambda),\widehat{\bbeta}(\lambda))$ of the objective function $Q_n(\balpha,\bbeta,\lambda)$ such that 
		\begin{align}\label{eq: consistency}
			\mathbb{P}\left\{\big(\widehat{\balpha}(\lambda),\widehat{\bbeta}(\lambda)\big)=\big(\widehat{\balpha}^{or},\widehat{\bbeta}^{or}\big)\right\}\rightarrow 1\quad\mbox{as}\,\,\,n\to\infty.
		\end{align}
	\end{theo}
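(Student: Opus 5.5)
We follow the strategy of \citet{ma2017concave} and \citet{yan2021subgroup}: we show that, with probability tending to one, the oracle estimator $(\widehat{\balpha}^{or},\widehat{\bbeta}^{or})$ is a strict local minimizer of $Q_n(\balpha,\bbeta;\lambda)$. Write $L_n(\balpha,\bbeta)=\sum_i[\Delta_i/\widehat{S}(T_i^*)]\rho(Y_i^*-\bz_i^{\top}\balpha_i-\bx_i^{\top}\bbeta)$ and $P_n(\balpha)=\sum_{i<j}p(\|\balpha_i-\balpha_j\|,\lambda)$. Let $T:\mathcal{M}_G=\Omega\times\mathbb{R}^p\to\mathbb{R}^{Kq+p}$ map $(\balpha,\bbeta)$ to $(\bbeta,\brho)$, and let $T^*$ be the map that sends a general $(\balpha,\bbeta)$ to $(\bbeta,\brho^*)$, where $\brho^*_k=|G_k|^{-1}\sum_{i\in G_k}\balpha_i$ are the group averages. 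We will work on the event $E_n=\{\sup_i\|\widehat{\balpha}^{or}_i-\balpha_{0i}\|\le \phi_n\}$ with $\phi_n=\xi_n$ (up to a vanishing constant). By Theorem~\ref{theorem1}, $\mathbb{P}(E_n)\to1$.

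\textbf{Step 1 (the penalty is constant on the oracle subspace).} On $E_n$, for $i\in G_k$, $j\in G_{k'}$ with $k\neq k'$, the triangle inequality gives
\begin{align*}
\|\widehat{\balpha}^{or}_i-\widehat{\balpha}^{or}_j\|\ge b_n-2\phi_n>c\lambda .
\end{align*}
This uses $b_n>c\lambda$ together with $\lambda\gg\xi_n$; more precisely we need $b_n-2\phi_n>c\lambda$, which holds if the condition is read as $b_n>c\lambda+2\phi_n$, with the slack absorbed asymptotically. By Condition~\ref{c1}, $P_n$ is therefore the constant $C_n$ at any $(\balpha,\bbeta)\in\mathcal{M}_G$ in a neighborhood of the oracle. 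It follows that $Q_n=L_n+C_n$ there, and the oracle minimizes it. Hence $Q_n(\balpha^*,\bbeta)\ge Q_n(\widehat{\balpha}^{or},\widehat{\bbeta}^{or})$ for $(\balpha^*,\bbeta)=T^{-1}(T^*(\balpha,\bbeta))$, with strict inequality unless the two points coincide, by uniqueness of the oracle minimizer.

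\textbf{Step 2 (projection decreases $Q_n$ nearby).} Take a neighborhood $\Theta_n=\{\sup_i\|\balpha_i-\balpha_{0i}\|\le t_n\}$ with $t_n$ slightly larger than $\phi_n$ but $t_n=o(\lambda)$. For $(\balpha,\bbeta)\in\Theta_n$ we show $Q_n(\balpha,\bbeta)\ge Q_n(\balpha^*,\bbeta)$. Expand the loss around $\balpha^*$ via the mean value theorem:
\begin{align*}
L_n(\balpha,\bbeta)-L_n(\balpha^*,\bbeta)=-\sum_i \frac{\Delta_i}{\widehat S(T_i^*)}\dot\rho(\cdot)\bz_i^{\top}(\balpha_i-\balpha_i^*).
\end{align*}
Since $\dot\rho$ is bounded (Condition~\ref{c3}), $\widehat S\ge \widehat S(\nu)$ is bounded away from zero a.s.\ by Condition~\ref{c2}, and $\|\bz_i\|\le c_2\sqrt q$ (Condition~\ref{c4}), this difference is bounded by $M\sqrt q\sum_i\|\balpha_i-\balpha_i^*\|$. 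Within-group differences can be rewritten using
\begin{align*}
\balpha_i-\balpha_i^*=|G_k|^{-1}\sum_{j\in G_k}(\balpha_i-\balpha_j),
\end{align*}
so $\sum_i\|\balpha_i-\balpha^*_i\|\lesssim G_{\min}^{-1}\sum_k\sum_{i<j\in G_k}\|\balpha_i-\balpha_j\|$ up to constants.

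For the penalty, between-group pairs still satisfy $\|\balpha_i-\balpha_j\|>c\lambda$ on $\Theta_n$, so those terms are the same for $\balpha$ and $\balpha^*$. Within-group pairs are zero under $\balpha^*$, while under $\balpha$ they have $\|\balpha_i-\balpha_j\|\le 2t_n$. By concavity of the penalty and $\dot\varrho(0+)=1$, each such pair contributes at least $\lambda\dot\varrho(4t_n)\|\balpha_i-\balpha_j\|$, and $\dot\varrho(4t_n)\to1$ since $t_n=o(\lambda)$. Combining,
\begin{align*}
Q_n(\balpha,\bbeta)-Q_n(\balpha^*,\bbeta)\ge \sum_k\sum_{i<j\in G_k}\Big(\lambda\dot\varrho(4t_n)-\frac{M'\sqrt q}{G_{\min}}\Big)\|\balpha_i-\balpha_j\|\ge0 .
\end{align*}
The last inequality holds provided $\lambda\gg \sqrt q/G_{\min}$, which follows from $\lambda\gg\xi_n\ge n^{1/2+r}\sqrt{Kq+p}/G_{\min}$.

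\textbf{Main obstacle.} The delicate point is making the mean-value bound uniform. We must control the IPW weights using the Kaplan--Meier estimator rather than $S$; Condition~\ref{c2} gives $\widehat S(T_i^*)\ge\widehat S(\nu)\to S(\nu)>0$ a.s. We must also check that the neighborhood $\Theta_n$ (in sup-norm over $i$) contains the oracle on $E_n$, and that the strict inequality from Step 1 combines with Step 2 to yield a strict local minimum. Combining Steps 1 and 2 on $E_n$ shows the oracle is a local minimizer of $Q_n$, which gives the probability statement in \eqref{eq: consistency}.
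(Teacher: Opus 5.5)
Your proposal is correct and follows essentially the same two-step argument as the paper: first, the penalty is constant near the oracle subspace, so the oracle beats every group-averaged point $(\balpha^*,\bbeta)$; second, a Lipschitz bound of order $\sqrt{q}/G_{\min}$ per within-group pair on the loss is dominated by the penalty slope $\lambda\dot\varrho(4t_n)$. The differences (centering $\Theta_n$ at $\balpha_0$, comparing penalties directly via concavity) are cosmetic, and your explicit requirements $b_n-2\xi_n>c\lambda$ and $t_n=o(\lambda)$ are more careful than the paper's wording; one small slip is that under Condition~\ref{c2} we have $S(\nu)=\mathbb{P}(C>\nu)=0$, so the weight bound should use the left limit, $\widehat{S}(T_i^*)\ge \widehat{S}(\nu-)\to S(\nu-)\ge \mathbb{P}(C=\nu)>0$ for uncensored subjects (which have $T_i^*<\nu$ almost surely).
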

	
	Theorem~\ref{theorem2} shows that the oracle estimator $(\widehat{\balpha}^{or},\widehat{\bbeta}^{or})$ is indeed a local minimizer of the objective function~\eqref{eq: IPW-obj-fun} with probability approaching one.
	Let $\widehat{\brho}(\lambda)$ be the distinct values of $\widehat{\balpha}(\lambda)$. 
	Recall that $\widehat{\brho}^{or}$ consists of the distinct values of $\widehat{\balpha}^{or}$. 
	Then it follows from~\eqref{eq: consistency} that 
	$\mathbb{P}\big(\widehat{\brho}(\lambda)=\widehat{\brho}^{or}\big)\rightarrow 1$ and 
	$\mathbb{P}\big(\widehat{\bbeta}(\lambda)=\widehat{\bbeta}^{or}\big)\rightarrow 1$ as $n\to\infty$.
	Combined with Theorem~\ref{theorem1}, which establishes that the
	oracle estimator asymptotically converges to the true parameter, 
	we conclude that our estimator $(\widehat{\brho}(\lambda),\widehat{\bbeta}(\lambda))$ also converges to the true parameter with probability approaching one.
	This result, together with the asymptotic normality established in Theorem~\ref{theorem1},
	directly implies the asymptotic distribution of our estimator $(\widehat{\brho}(\lambda),\widehat{\bbeta}(\lambda))$, as presented in the following corollary.
	
	\begin{coll}
		Under the conditions in Theorem \ref{theorem2}, we have that $ \ba_n^{\top}\bmathcalV_n^{-1/2}(\widehat{\bphi}-\bphi_0)\xrightarrow{d} N(0,1)$ for any unit column vector $\ba_n\in\mathbb{R}^{p+Kq}$ as $n\to\infty$.
		Consequently, we have $\ba_{n1}^{\top}\bmathcalV_{n11}^{-1/2}(\widehat{\bbeta}-\bbeta_0)\xrightarrow{d} N(0,1)$ and $\ba_{n2}^{\top}\bmathcalV_{n22}^{-1/2}(\widehat{\brho}-\brho_0)\xrightarrow{d} N(0,1)$, where
		$\ba_{n1}\in \mathbb{R}^{p}$ and $\ba_{n2}\in \mathbb{R}^{ Kq}$ are unit column vectors.
	\end{coll}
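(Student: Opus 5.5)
The plan is to transfer the asymptotic normality in Theorem~\ref{theorem1} from the oracle estimator to $\widehat{\bphi}=(\widehat{\bbeta}(\lambda)^{\top},\widehat{\brho}(\lambda)^{\top})^{\top}$ through the event $E_n=\{(\widehat{\balpha}(\lambda),\widehat{\bbeta}(\lambda))=(\widehat{\balpha}^{or},\widehat{\bbeta}^{or})\}$. By Theorem~\ref{theorem2}, $\mathbb{P}(E_n)\to1$ under the stated conditions. On $E_n$, the distinct values of $\widehat{\balpha}(\lambda)$ coincide with those of $\widehat{\balpha}^{or}$. Since $b_n>c\lambda\gg\xi_n$ and $\sup_i\|\widehat{\balpha}^{or}_i-\balpha_{0i}\|=o(\xi_n)$ by Theorem~\ref{theorem1}, the oracle values for different true groups stay separated. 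Hence, for large $n$, the estimated groups equal $G_1,\dots,G_K$, and we may label $\widehat{\brho}(\lambda)$ in the same order as $\brho_0$. Therefore $\widehat{\bphi}=\widehat{\bphi}^{or}$ on $E_n$, where $\widehat{\bphi}^{or}=((\widehat{\bbeta}^{or})^{\top},(\widehat{\brho}^{or})^{\top})^{\top}$.

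Set $T_n=\ba_n^{\top}\bmathcalV_n^{-1/2}(\widehat{\bphi}-\bphi_0)$ and $T_n^{or}=\ba_n^{\top}\bmathcalV_n^{-1/2}(\widehat{\bphi}^{or}-\bphi_0)$. For any $t$,
\[
|\mathbb{P}(T_n\le t)-\mathbb{P}(T_n^{or}\le t)|\le \mathbb{P}(E_n^c)\to0 .
\]
Equivalently, $T_n-T_n^{or}=o_p(1)$, so Slutsky's lemma applies. Combined with $T_n^{or}\xrightarrow{d}N(0,1)$ from Theorem~\ref{theorem1}, this gives the first claim.

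For the marginal statements about $\widehat{\bbeta}$ and $\widehat{\brho}$, taking $\ba_n=(\ba_{n1}^{\top},\bzero^{\top})^{\top}$ is not enough, because the block of $\bmathcalV_n^{-1/2}$ is not $\bmathcalV_{n11}^{-1/2}$. Instead I would argue from the linear representation underlying Theorem~\ref{theorem1}. The proof of that theorem shows that
\[
\mathbf{c}_n^{\top}(\widehat{\bphi}^{or}-\bphi_0)\big/\sqrt{\mathbf{c}_n^{\top}\bmathcalV_n\mathbf{c}_n}\xrightarrow{d}N(0,1)
\]
for arbitrary nonzero $\mathbf{c}_n$, which is the Cramér–Wold form of the same Lindeberg argument. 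Apply this with $\mathbf{c}_n=(\bmathcalV_{n11}^{-1/2}\ba_{n1};\bzero)$. Then $\mathbf{c}_n^{\top}\bmathcalV_n\mathbf{c}_n=\ba_{n1}^{\top}\bmathcalV_{n11}^{-1/2}\bmathcalV_{n11}\bmathcalV_{n11}^{-1/2}\ba_{n1}=1$, so the normalization is exactly one. The choice $\mathbf{c}_n=(\bzero;\bmathcalV_{n22}^{-1/2}\ba_{n2})$ handles $\widehat{\brho}$ in the same way. The same coupling with $E_n$ then transfers both statements to $\widehat{\bphi}$.

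The only delicate step is justifying that Theorem~\ref{theorem1} yields normality along arbitrary directions with the $\mathbf{c}_n^{\top}\bmathcalV_n\mathbf{c}_n$ normalization. I would extract this from its proof: a Taylor expansion of the score equation gives a leading term $\bV_n^{-1}\Phi_n^S(\bphi_0)$, plus a Kaplan–Meier correction that is asymptotically negligible. The leading term is a sum of independent bounded terms, by Condition~\ref{c3} and the boundedness of $1/S$ from Condition~\ref{c2}, so the Lindeberg condition holds along any direction.
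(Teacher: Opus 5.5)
Your argument for the first claim is the paper's argument. Theorem~\ref{theorem2} gives $\mathbb{P}(E_n)\to1$, and on $E_n$ the estimator coincides with the oracle. Your remark that the distinct values can be labeled consistently with $\brho_0$, because $b_n>c\lambda\gg\xi_n$, fills in a point the paper leaves implicit. The normality in Theorem~\ref{theorem1} then transfers by Slutsky.

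For the marginal claims the paper only writes ``consequently''. You are right that zero-padding $\ba_n$ does not yield $\bmathcalV_{n11}^{-1/2}$, and reparametrizing is the correct fix. But the step you call delicate does not require going back into the proof of Theorem~\ref{theorem1}. The directional form is already contained in the statement for arbitrary unit $\ba_n$. Take $\mathbf{c}_n=((\bmathcalV_{n11}^{-1/2}\ba_{n1})^{\top},\bzero^{\top})^{\top}$ and, using the symmetric square root, set $\ba_n=\bmathcalV_n^{1/2}\mathbf{c}_n$. Then $\|\ba_n\|^2=\mathbf{c}_n^{\top}\bmathcalV_n\mathbf{c}_n=1$, and $\ba_n^{\top}\bmathcalV_n^{-1/2}(\widehat{\bphi}-\bphi_0)=\mathbf{c}_n^{\top}(\widehat{\bphi}-\bphi_0)=\ba_{n1}^{\top}\bmathcalV_{n11}^{-1/2}(\widehat{\bbeta}-\bbeta_0)$. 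The $\widehat{\brho}$ case is the same. So both marginal statements are special cases of the corollary's first claim, and no second coupling is needed.

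This shortcut is worth taking, because the re-derivation you sketch assumes the Kaplan--Meier correction is asymptotically negligible. Nothing in the paper delivers that at the required scale. The bound \eqref{th1-1} only controls $\Phi_n-\Phi_n^S$ at order $o(n^{1/2+r}\sqrt{Kq+p})$, which is not small relative to the stochastic size of $\Phi_n^S(\bphi_0)$. Moreover, for inverse-probability-weighted estimators, estimating $S$ typically does contribute to the limiting variance.
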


\section{Simulation  studies}\label{section5}
\setcounter{equation}{0}

%

	This section evaluates the finite-sample performance of the proposed method through five simulation studies. Examples 1 and 2 
	involve single and multiple heterogeneous effects, respectively. Example 3 evaluates the performance of our method when $K=1$, corresponding to the homogeneous AFT model. Example 4 investigates the sensitivity of the method to the choice of $\tau$, whereas Example 5 considers a more challenging scenario with a larger number of groups and higher-dimensional covariates $p$ and $q$.
	Due to page limitations, we report only the results for Examples 1 and 2 in the main text. The results for Examples 3–5 are presented in Tables~\ref{tablec1}–\ref{example5-table2} and Figures~\ref{fig: RMSE-Example4}–\ref{fig: RMSE-Example5}  in the Supplementary Material.

\subsection{Example 1 (Single heterogeneous effect)}

Following setups similar to those in~\cite{yan2021subgroup}, 
we generate the failure times  $T_i$ from a heterogeneous AFT model
\begin{align*}
	\log (T_i)=  z_i\alpha_i + \bx_i^{\top}\bbeta + \varepsilon_i,\quad \,\, i=1, 2, \cdots, n,
\end{align*}
where $z_i$ is generated from the standard normal distribution, $\bx_i=(x_{i1},x_{i2})^{\top}$ is generated from a bivariate standard  normal distribution,  $\bbeta=(\beta_1,\beta_2)^{\top}=(-1,1)^{\top}$, and each subgroup-specific coefficient
$\alpha_i$ is drawn from three different values $\rho_1=4$, $\rho_2=-4$, and $\rho_3=0$ with equal probability. Equivalently, there are three subgroups $G_1$, 
$G_2$, and $G_3$ and $\alpha_i=\rho_k$ for $i\in G_k$, $k=1, 2, 3$.
We consider three different cases for generating $\varepsilon_i$:  Case 1 (normal), $\varepsilon_i\stackrel{i . i . d .}{\sim} N(1,0.2^2)$; Case 2 (heteroscedastic normal), $\varepsilon_i=\Phi(x_{i1})\widetilde{\varepsilon}_{i}$ where  $\Phi(\cdot)$ is the cumulative distribution function of $N(0,1)$ and $\widetilde{\varepsilon}_{i}\stackrel{i . i . d .}{\sim}N(0,1)$; Case 3 (t-distribution): $\varepsilon_i\stackrel{i . i . d .}{\sim} 0.5t_{(3)}$.
In addition, we generate the censoring time $C_i$ from $\min(c,U(0,c+2))$, where $U(a, b)$ denotes the uniform distribution 
on the interval $(a, b)$ and $c$ controls the censoring rate. 
We set $c$ to yield a censoring rate of approximately $20\%$ and $40\%$, respectively, and
consider sample sizes of $n=100$ and $n=200$. 
Following~\cite{huber1981robust} and~\cite{zhao2020statistical}, we set 
$\tau=1.345$ for the Huber loss.  
As in \cite{ma2017concave}, we set $a = 3$ for the SCAD penalty and $\vartheta = 1$ in our algorithm. The optimal value of 
$\lambda$ is selected by minimizing the modified BIC criterion  in \eqref{BIC-lambda}. 
%
We compare our method (RISA-ADMM) with the imputation-based approach (BJ-ADMM) proposed by~\cite{yan2021subgroup}.

\begin{table*}[!t]
	\centering 
	\caption{The mean, median, and standard deviation (SD) of $\widehat{K}$, as well as the proportion of correctly identifying the number of subgroups (i.e., $P(\widehat{K} = K)$) for BJ-ADMM and RISA-ADMM over 100 replications in Example 1.}
	\label{tab: Khat-Example1}
	\resizebox{\textwidth}{!}{
		\begin{tabular}{ll*{12}{c}} 
			\toprule
			& & \multicolumn{6}{c}{$n=100$} & \multicolumn{6}{c}{$n=200$} \\
			\cmidrule(lr){3-8} \cmidrule(lr){9-14}
			& & \multicolumn{2}{c}{Case 1} & \multicolumn{2}{c}{Case 2} & \multicolumn{2}{c}{Case 3} & \multicolumn{2}{c}{Case 1} & \multicolumn{2}{c}{Case 2} & \multicolumn{2}{c}{Case 3} \\
			\cmidrule(lr){3-4} \cmidrule(lr){5-6} \cmidrule(lr){7-8} \cmidrule(lr){9-10} \cmidrule(lr){11-12} \cmidrule(lr){13-14}
			Method & Estimation & 20\% & 40\% & 20\% & 40\% & 20\% & 40\% & 20\% & 40\% & 20\% & 40\% & 20\% & 40\% \\
			\midrule
			\multirow{4}{*}{BJ-ADMM} 
			& Mean               & 3.07 & 3.12 & 3.36 & 3.44 & 3.48 & 3.55 & 3.01 & 3.02 & 3.23 & 3.25 & 3.44 & 3.49 \\
			& Median             & 3    & 3    & 3    & 3    & 3    & 3    & 3    & 3    & 3    & 3    & 3    & 3    \\
			& SD                 & 0.26 & 0.36 & 0.66 & 0.74 & 0.62 & 0.67 & 0.10 & 0.14 & 0.46 & 0.48 & 0.58 & 0.63 \\
			& $P(\widehat{K}=K)$ & 0.93 & 0.89 & 0.73 & 0.71 & 0.60 & 0.54 & 0.99 & 0.98 & 0.78 & 0.75 & 0.64 & 0.58 \\
			\addlinespace
			\multirow{4}{*}{RISA-ADMM}
			& Mean               & 3.03 & 3.04 & 3.24 & 3.28 & 3.27 & 3.36 & 3.01 & 3.02 & 3.18 & 3.22 & 3.21 & 3.30 \\
			& Median             & 3    & 3    & 3    & 3    & 3    & 3    & 3    & 3    & 3    & 3    & 3    & 3    \\
			& SD                 & 0.17 & 0.20 & 0.50 & 0.51 & 0.53 & 0.56 & 0.10 & 0.14 & 0.42 & 0.44 & 0.47 & 0.48 \\
			& $P(\widehat{K}=K)$ & 0.97 & 0.96 & 0.79 & 0.75 & 0.78 & 0.72 & 0.99 & 0.98 & 0.85 & 0.82 & 0.81 & 0.78 \\
			\bottomrule
		\end{tabular}
	}
\end{table*}


\begin{figure*}[h!]
	\centering
	\includegraphics[width=\linewidth]{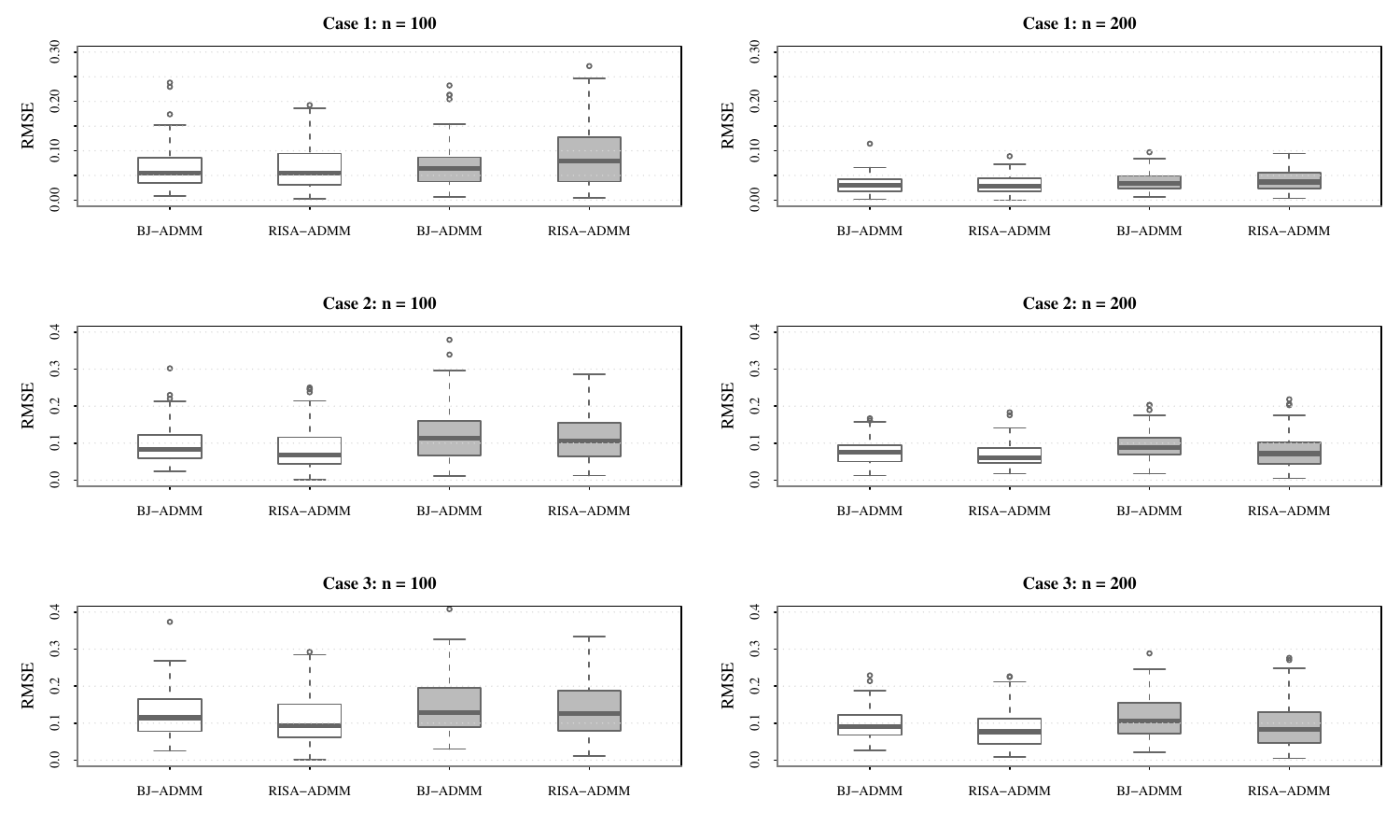}
	
	\caption{Boxplots of the RMSE of $\widehat{\bbeta}$ 
		for BJ-ADMM and RISA-ADMM over 100 replications in Example 1. White: $20\%$ censoring; Grey: $40\%$ censoring.}
	\label{fig: RMSE-Example1}
\end{figure*}

We summarize the results for Example 1 in Table~\ref{tab: Khat-Example1},  Figure~\ref{fig: RMSE-Example1}, and Table~\ref{tab: rho-hat-Example1}.  
These results show that RISA-ADMM performs similarly to BJ-ADMM when the error distribution is normal (Case 1), but consistently outperforms BJ-ADMM by exhibiting greater robustness, lower variability, and more accurate subgroup identification and parameter estimation under heteroscedastic and heavy-tailed error distributions (Cases 2 and 3). We discuss our findings in order.
Table~\ref{tab: Khat-Example1} reports the mean, median, and standard deviation (SD) of the estimated number of subgroups $\widehat{K}$, as well as the proportion of correctly identifying the 
number of subgroups  (i.e., $P(\widehat{K} = K)$) 
for BJ-ADMM and RISA-ADMM under different settings in Example~1 based on 100 replications.  
For both methods, we observe that the median value of $\widehat{K}$ is 3 for all cases, which is the true number of groups in this example.
As expected, increasing the sample size 
improves the performance of both methods, 
as reflected by estimated numbers of subgroups closer to the true value of $K$, reduced standard deviations, and higher probabilities of correct subgroup identification. In contrast, increasing the censoring rate degrades the performance of both methods. It can be seen that our method RISA-ADMM consistently outperforms BJ-ADMM in identifying subgroups across most settings. Specifically, for Case~1, where the error distribution is normal, BJ-ADMM and RISA-ADMM exhibit similar performance across different sample sizes and censoring rates. However, for Cases~2 and~3, which involve heteroscedastic errors or heavier-tailed error distributions, our method RISA-ADMM yields estimated subgroup numbers that are closer to the true number of subgroups, with smaller standard deviations across different sample sizes and censoring rates, indicating greater stability. In addition, in terms of subgroup recovery accuracy, RISA-ADMM achieves higher probabilities of correct subgroup identification than BJ-ADMM in Cases~2 and~3.

\begin{table*}[h!]
	\centering
	\caption{The mean, median, and standard deviation of the estimates $\widehat{\rho}_1$, $\widehat{\rho}_2$, and $\widehat{\rho}_3$ obtained using the BJ-ADMM and RISA-ADMM methods, along with their oracle counterparts, in Example 1.}
	\label{tab: rho-hat-Example1}
	\resizebox{\textwidth}{!}{%
		\begin{tabular}{lll*{12}{c}}
			\toprule
			& & & \multicolumn{6}{c}{$n=100$} & \multicolumn{6}{c}{$n=200$} \\ 
			\cmidrule(lr){4-9} \cmidrule(lr){10-15}
			& & & \multicolumn{3}{c}{Censoring = 20\%} & \multicolumn{3}{c}{Censoring = 40\%} & \multicolumn{3}{c}{Censoring = 20\%} & \multicolumn{3}{c}{Censoring = 40\%} \\ 
			\cmidrule(lr){4-6} \cmidrule(lr){7-9} \cmidrule(lr){10-12} \cmidrule(lr){13-15}
			Case & Method & Estimation & Mean & Median & SD & Mean & Median & SD & Mean & Median & SD & Mean & Median & SD \\ 
			\midrule
			
			\multirow{12.5}{*}{Case 1}
			& \multirow{6}{*}{BJ-ADMM} 
			& $\widehat{\rho}_1$      & 3.992 & 3.998 & 0.066 & 3.986 & 4.000 & 0.068 & 3.997 & 3.996 & 0.036 & 3.993 & 3.993 & 0.037 \\
			& & $\widehat{\rho}_1^{or}$ & 4.001 & 4.003 & 0.047 & 4.006 & 3.997 & 0.051 & 3.997 & 3.985 & 0.033 & 3.994 & 3.985 & 0.034 \\
			& & $\widehat{\rho}_2$      & -3.980 & -3.988 & 0.048 & -3.978 & -3.996 & 0.064 & -3.989 & -3.989 & 0.036 & -3.983 & -3.983 & 0.041 \\
			& & $\widehat{\rho}_2^{or}$ & -4.012 & -4.013 & 0.040 & -4.014 & -4.012 & 0.055 & -3.990 & -3.995 & 0.031 & -4.011 & -3.978 & 0.034 \\
			& & $\widehat{\rho}_3$      & -0.004 & 0.000 & 0.092 & 0.011 & 0.009 & 0.103 & -0.002 & -0.003 & 0.047 & 0.005 & -0.002 & 0.057 \\
			& & $\widehat{\rho}_3^{or}$ & 0.002 & -0.013 & 0.039 & -0.010 & -0.020 & 0.069 & 0.001 & -0.003 & 0.025 & 0.002 & 0.000 & 0.036 \\ 
			\addlinespace[3pt]
			& \multirow{6}{*}{RISA-ADMM} 
			& $\widehat{\rho}_1$      & 3.987 & 3.985 & 0.051 & 3.969 & 3.977 & 0.055 & 3.989 & 3.983 & 0.047 & 3.984 & 3.994 & 0.062 \\
			& & $\widehat{\rho}_1^{or}$ & 4.000 & 4.001 & 0.051 & 3.994 & 4.000 & 0.060 & 4.003 & 4.006 & 0.034 & 4.005 & 4.005 & 0.033 \\
			& & $\widehat{\rho}_2$      & -3.985 & -3.981 & 0.045 & -3.972 & -3.971 & 0.056 & -3.987 & -3.981 & 0.040 & -3.978 & -3.987 & 0.042 \\
			& & $\widehat{\rho}_2^{or}$ & -4.002 & -4.002 & 0.042 & -4.001 & -4.000 & 0.055 & -3.995 & -3.996 & 0.033 & -3.998 & -3.994 & 0.035 \\
			& & $\widehat{\rho}_3$      & 0.004 & 0.005 & 0.044 & 0.009 & 0.011 & 0.051 & 0.001 & 0.003 & 0.033 & 0.005 & 0.003 & 0.041 \\
			& & $\widehat{\rho}_3^{or}$ & 0.006 & 0.005 & 0.036 & 0.003 & 0.004 & 0.043 & 0.001 & 0.004 & 0.028 & -0.003 & -0.002 & 0.040 \\
			\midrule
			
			\multirow{12.5}{*}{Case 2}           
			& \multirow{6}{*}{BJ-ADMM} 
			& $\widehat{\rho}_1$      & 3.932 & 3.817 & 0.264 & 3.914 & 3.765 & 0.293 & 3.985 & 3.978 & 0.092 & 3.977 & 3.993 & 0.117 \\
			& & $\widehat{\rho}_1^{or}$ & 3.995 & 4.005 & 0.163 & 3.973 & 4.019 & 0.189 & 4.004 & 4.005 & 0.082 & 3.983 & 3.991 & 0.104 \\
			& & $\widehat{\rho}_2$      & -3.976 & -3.946 & 0.272 & -3.969 & -3.970 & 0.299 & -3.979 & -3.976 & 0.108 & -3.968 & -3.973 & 0.119 \\
			& & $\widehat{\rho}_2^{or}$ & -3.986 & -3.996 & 0.141 & -3.979 & -3.998 & 0.158 & -3.995 & -3.992 & 0.106 & -3.982 & -3.984 & 0.115 \\
			& & $\widehat{\rho}_3$      & 0.029 & 0.037 & 0.106 & 0.037 & 0.034 & 0.144 & -0.005 & -0.010 & 0.107 & -0.022 & -0.031 & 0.125 \\
			& & $\widehat{\rho}_3^{or}$ & 0.015 & 0.018 & 0.097 & -0.015 & -0.002 & 0.126 & 0.000 & -0.004 & 0.079 & -0.002 & 0.000 & 0.109 \\ 
			\addlinespace[3pt]
			& \multirow{6}{*}{RISA-ADMM} 
			& $\widehat{\rho}_1$      & 3.980 & 3.973 & 0.166 & 3.977 & 3.979 & 0.173 & 3.995 & 3.982 & 0.096 & 3.987 & 3.967 & 0.100 \\
			& & $\widehat{\rho}_1^{or}$ & 3.982 & 3.980 & 0.159 & 3.980 & 3.973 & 0.164 & 4.012 & 3.999 & 0.105 & 4.015 & 3.997 & 0.111 \\
			& & $\widehat{\rho}_2$      & -3.990 & -3.988 & 0.174 & -3.983 & -3.975 & 0.183 & -3.992 & -3.986 & 0.100 & -3.989 & -3.986 & 0.106 \\
			& & $\widehat{\rho}_2^{or}$ & -3.996 & -3.972 & 0.160 & -3.996 & -4.000 & 0.159 & -4.003 & -3.987 & 0.100 & -4.009 & -3.972 & 0.110 \\
			& & $\widehat{\rho}_3$      & 0.007 & 0.010 & 0.121 & 0.019 & 0.022 & 0.194 & 0.004 & 0.002 & 0.087 & -0.014 & -0.013 & 0.122 \\
			& & $\widehat{\rho}_3^{or}$ & 0.007 & 0.020 & 0.099 & 0.009 & 0.013 & 0.154 & -0.001 & -0.003 & 0.076 & -0.001 & -0.013 & 0.105 \\ 
			\midrule
			
			\multirow{12.5}{*}{Case 3}              
			& \multirow{6}{*}{BJ-ADMM} 
			& $\widehat{\rho}_1$      & 3.943 & 3.976 & 0.276 & 3.931 & 3.921 & 0.335 & 3.957 & 3.781 & 0.211 & 3.938 & 3.741 & 0.269 \\
			& & $\widehat{\rho}_1^{or}$ & 3.993 & 3.993 & 0.193 & 4.048 & 4.013 & 0.197 & 4.006 & 4.022 & 0.118 & 4.003 & 4.012 & 0.124 \\
			& & $\widehat{\rho}_2$      & -3.981 & -3.797 & 0.251 & -3.967 & -3.783 & 0.311 & -3.989 & -3.762 & 0.220 & -3.973 & -3.684 & 0.285 \\
			& & $\widehat{\rho}_2^{or}$ & -4.013 & -4.007 & 0.158 & -4.016 & -4.019 & 0.199 & -4.010 & -4.022 & 0.120 & -4.022 & -4.008 & 0.129 \\
			& & $\widehat{\rho}_3$      & -0.002 & -0.005 & 0.197 & -0.028 & -0.025 & 0.199 & -0.008 & -0.009 & 0.114 & -0.011 & -0.010 & 0.165 \\
			& & $\widehat{\rho}_3^{or}$ & 0.004 & -0.011 & 0.178 & -0.009 & -0.013 & 0.186 & 0.001 & 0.009 & 0.100 & -0.012 & 0.010 & 0.132 \\
			\addlinespace[3pt]
			& \multirow{6}{*}{RISA-ADMM} 
			& $\widehat{\rho}_1$      & 4.041 & 4.024 & 0.213 & 4.053 & 3.975 & 0.220 & 4.015 & 4.021 & 0.183 & 4.026 & 4.007 & 0.202 \\
			& & $\widehat{\rho}_1^{or}$ & 3.970 & 3.960 & 0.188 & 3.967 & 3.966 & 0.199 & 4.011 & 4.011 & 0.142 & 4.013 & 4.012 & 0.150 \\
			& & $\widehat{\rho}_2$      & -4.027 & -4.029 & 0.188 & -4.034 & -4.002 & 0.194 & -4.008 & -4.000 & 0.176 & -4.012 & -4.013 & 0.181 \\
			& & $\widehat{\rho}_2^{or}$ & -3.995 & -4.008 & 0.157 & -3.990 & -4.008 & 0.176 & -4.003 & -4.000 & 0.142 & -4.005 & -4.005 & 0.154 \\
			& & $\widehat{\rho}_3$      & 0.048 & -0.001 & 0.189 & 0.053 & -0.018 & 0.051 & 0.000 & -0.011 & 0.104 & -0.020 & -0.016 & 0.152 \\
			& & $\widehat{\rho}_3^{or}$ & -0.008 & -0.020 & 0.164 & -0.029 & -0.056 & 0.183 & -0.005 & -0.006 & 0.086 & -0.008 & -0.008 & 0.150 \\
			\bottomrule
		\end{tabular}%
	}
\end{table*}

\newpage

To examine the estimation accuracy of $\bbeta$, Figure~\ref{fig: RMSE-Example1} presents boxplots of the root mean square errors (RMSEs) of 
the estimates $\widehat{\bbeta}$
obtained by BJ-ADMM and RISA-ADMM, where the RMSE of $\widehat{\bbeta}$ is defined as $\|\widehat{\bbeta}-\bbeta\|/\sqrt{p}$. 
As expected, larger sample sizes and lower censoring rates improve the estimation performance of $\bbeta$
for both BJ-ADMM and RISA-ADMM.
In Case~1 with normally distributed errors, 
both methods perform comparably, although BJ-ADMM attains slightly lower median RMSEs and variability, particularly when $n=100$.  In contrast, in Cases~2 and~3, which involve heteroscedastic errors or heavier-tailed distributions, 
RISA-ADMM consistently outperforms BJ-ADMM, with
lower median RMSEs, smaller variability, and fewer extreme outliers.

To further assess the accuracy of the subgroup-specific effect estimates $\widehat{\rho}_k$, we report the mean, median, and standard deviation of the estimates $\hat{\rho}_1$, $\hat{\rho}_2$, and $\hat{\rho}_3$ obtained using the BJ-ADMM and RISA-ADMM methods, along with their oracle counterparts, in Table~\ref{tab: rho-hat-Example1}. The oracle estimators assume known group memberships and differ in their treatment of censoring: BJ-ADMM uses imputation, whereas RISA-ADMM applies inverse probability weighting. 
In Case 1, in which errors follow a normal distribution, both BJ-ADMM and RISA-ADMM yield accurate and stable estimates with similar variability.  In Case 2 with  heteroscedastic errors, our method RISA-ADMM demonstrates improved robustness, producing estimates with lower variability and more consistent medians compared to BJ-ADMM. The difference 
becomes more evident for $t$-distributed errors (Case 3), where our method RISA-ADMM maintains better accuracy and reduced dispersion, while BJ-ADMM shows increased sensitivity to outliers and heavier tails.  Across all cases, oracle estimators, which assume known subgroup membership, achieve the lowest variability, confirming the benefits of accurate subgroup identification.

\subsection{Example 2 (Multiple heterogeneous effects)}

%
%

In our second example, we generate the failure times $T_i$ from a heterogeneous AFT model
\begin{align*}
	\log (T_i)=  z_i^{\top}\alpha_i + \bx_i^{\top}\bbeta + \varepsilon_i,\quad \,\, i=1, 2, \cdots, n,
\end{align*}
where $\bz=(z_1,z_2)^{\top}$ follows a bivariate standard normal distribution, each subgroup-specific coefficient vector
$\balpha_i$ is drawn from three different vectors $\brho_1=(4,4)^{\top}$, $\brho_2=(-4,-4)^{\top}$, and $\brho_3=(0,0)^{\top}$ with equal probability, and errors satisfy $\epsilon_i \stackrel{i.i.d.}{\sim} 0.5t_{(3)}$. 
Equivalently, 
there are three subgroups $G_1$, 
$G_2$, and $G_3$ and $\balpha_i=\brho_k$ for $i\in G_k$, $k=1, 2, 3$. 
Additionally, $\bx_i$, $\bbeta$,
and censoring time $C_i$ 
are generated as in Example 1.

\begin{table*}[h!]
	\centering
	\caption{The mean, median, and standard deviation (SD) of $\widehat{K}$, as well as the proportion of correctly identifying the number of subgroups (i.e., $P(\widehat{K} = K)$) for BJ-ADMM and RISA-ADMM over 100 replications in Example 2.}
	\label{tab: Khat-Example2}
	\resizebox{\textwidth}{!}{%
		\begin{tabular}{cc*{8}{c}} 
			\toprule
			& & \multicolumn{4}{c}{BJ-ADMM} & \multicolumn{4}{c}{RISA-ADMM} \\
			\cmidrule(lr){3-6} \cmidrule(lr){7-10}
			Sample size & Censoring & {Mean} & {Median} & {SD} & {$P(\widehat{K}=K)$} & {Mean} & {Median} & {SD} & {$P(\widehat{K}=K)$} \\
			\midrule
			\multirow{2}{*}{$n=100$} 
			& {20\%} & 3.84 & 4 & 0.86 & 0.45 & 3.27 & 3 & 0.65 & 0.77 \\
			& {40\%} & 4.00 & 4 & 1.08 & 0.42 & 3.39 & 3 & 0.72 & 0.69 \\
			\addlinespace
			\multirow{2}{*}{$n=200$} 
			& {20\%} & 3.69 & 3 & 0.72 & 0.53 & 3.23 & 3 & 0.55 & 0.80 \\
			& {40\%} & 3.81 & 3 & 0.80 & 0.48 & 3.32 & 3 & 0.58 & 0.75 \\
			\bottomrule
		\end{tabular}%
	}
\end{table*}

Table~\ref{tab: Khat-Example2} reports the mean, median, and standard deviation of the estimated number of subgroups $\widehat{K}$,  as well as the proportion of correctly identifying the 
number of subgroups (i.e., $P(\widehat{K} = K)$) 
for BJ-ADMM and RISA-ADMM under different settings in Example~2 based on 100 replications.  
It can be seen that our method
RISA-ADMM provides more accurate, stable, and robust estimation of the number of subgroups than BJ-ADMM across all settings in Example 2.
Specifically, RISA-ADMM yields mean and median estimates of the number of subgroups that are consistently closer to the true value, whereas BJ-ADMM tends to overestimate the number of subgroups, exhibiting larger means and standard deviations. 
In particular, RISA-ADMM consistently attains a median 
$\widehat{K}$ of 3 across all settings, whereas BJ-ADMM produces a median estimate of 4 under the setting $n=100$, even though the true number of subgroups is 3.
In addition, RISA-ADMM achieves substantially higher proportions of correct identification $P(\widehat{K}=K)$
than BJ-ADMM in each setting of Example 2.

Figure~\ref{fig: RMSE-Example2} presents boxplots of 
RMSEs of 
the estimates $\widehat{\bbeta}$ obtained by BJ-ADMM and RISA-ADMM. 
It shows that RISA-ADMM consistently outperforms BJ-ADMM in all settings, with
lower median RMSEs and smaller variability.
As expected, larger sample sizes and lower censoring rates improve the estimation performance of $\bbeta$
for both BJ-ADMM and RISA-ADMM.


\vspace{-2mm}
\begin{figure}[h]
	\centering
	\includegraphics[width=\linewidth]{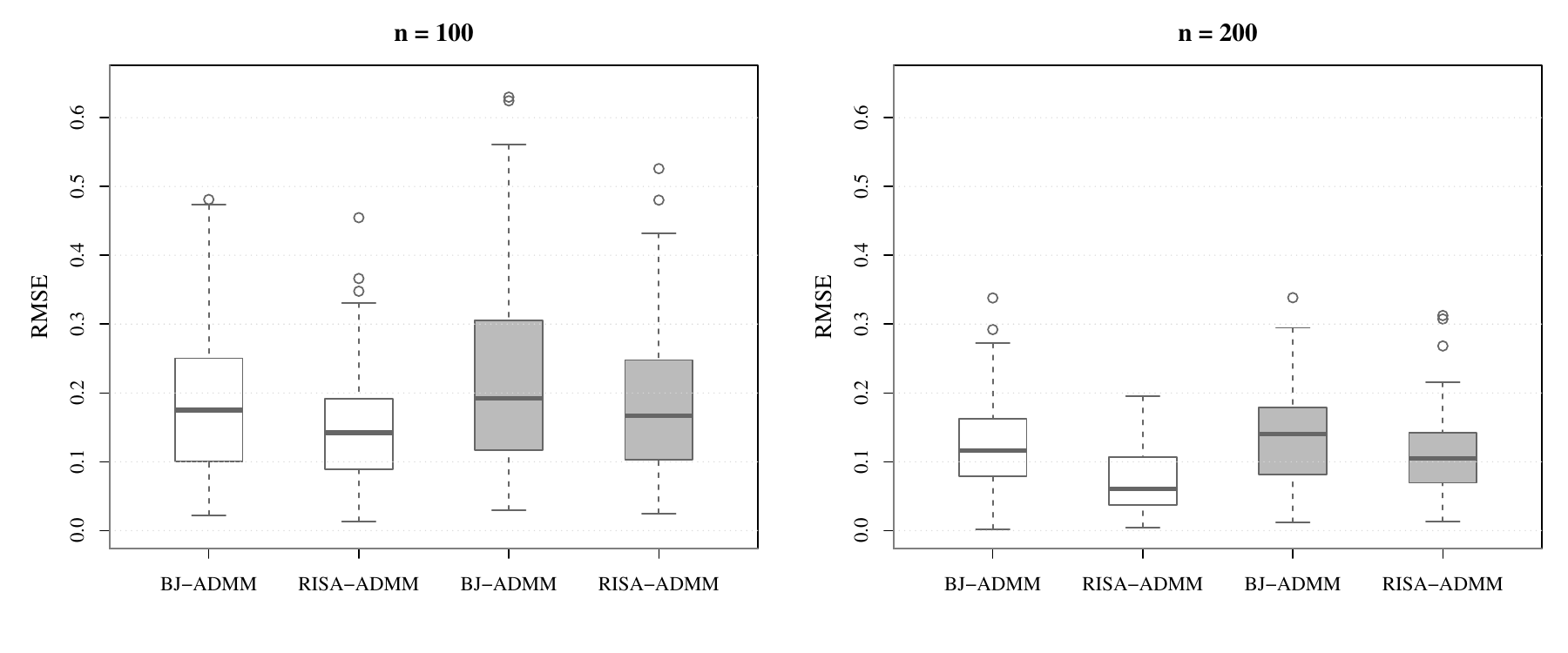}
	\caption{Boxplots of the RMSE of $\widehat{\bbeta}$ for BJ-ADMM and RISA-ADMM over 100 replications in Example 2. White: $20\%$ censoring; Grey: $40\%$ censoring.}
	\label{fig: RMSE-Example2}
\end{figure}

	\begin{table}[h]
		\centering
		\caption{The mean, median, and standard deviation (SD) of
			the root mean square errors (RMSEs) of $\widehat{\boldsymbol{\rho}}$ 
			in Example 2.}
		\label{tab: rho-RMSE-Example2}
			\scalebox{0.87}{
				\begin{tabular}{clcccccc}
					\toprule
					& & \multicolumn{3}{c}{Censoring = 20\%} & \multicolumn{3}{c}{Censoring = 40\%} \\
					\cmidrule(lr){3-5} \cmidrule(lr){6-8}
					$n$ & {Method} & {Mean} & {Median} & {SD} & {Mean} & {Median} & {SD} \\
					\midrule
					100 & BJ-ADMM & 0.385 & 0.334 & 0.217 & 0.601 & 0.411 & 0.758  \\
					& RISA-ADMM & 0.243 & 0.252 & 0.132 & 0.294 & 0.202 & 0.219  \\
					\addlinespace
					200 & BJ-ADMM  & 0.259 & 0.261 & 0.114 & 0.276 & 0.246 & 0.169 \\
					& RISA-ADMM & 0.158 & 0.145 & 0.061 & 0.202 & 0.197 & 0.074 \\
					\bottomrule
				\end{tabular}
			}
		\end{table}

		Table~\ref{tab: rho-RMSE-Example2} reports the mean, median and standard deviation of the RMSEs of $\widehat{\boldsymbol{\rho}}$ in Example~2, where the RMSE is defined as $\lVert \widehat{\boldsymbol{\rho}} - \boldsymbol{\rho} \rVert / \sqrt{Kq}$. 
		We observe that 
		RISA-ADMM consistently outperforms BJ-ADMM across all sample sizes and censoring rates, achieving lower mean and median RMSEs with smaller variability. The advantage is especially evident under heavier censoring.

		\section{Real data application: the German credit data}\label{section6}
		
		We apply our 
		proposed method to the German credit data, which was also analyzed by \cite{pei2024latent}.  
		This dataset classifies people described by a set of features as good or bad credit risks.
		The primary goal for banks is to understand who they are lending to. By analyzing subgroups defined by borrowers’ characteristics, banks can better identify how default risk varies across different subgroups. Once these differences are understood, borrowers can be categorized into distinct risk groups, enabling banks to make more informed decisions about whether to lend and under what terms.
		
		We define $T$ as 
		the default time and $z$ as a binary variable indicating the status of the checking account.  Our objective is to perform subgroup analysis under AFT models with $T$ as the response variable and $z$ as the treatment variable after adjusting for the effects of the covariates: credit amount $(x_1)$, present residence $(x_2)$,  installment rate $(x_3)$, purpose $(x_4)$, savings account $(x_5)$, other debtors or guarantors $(x_6)$, other installment plans $(x_7)$, housing $(x_8)$, and telephone $(x_9)$. 
		We randomly select 500 debtors from the original data for our analysis. Following~\cite{pei2024latent}, in our analysis, continuous variables are standardized to have mean zero and unit variance, and categorical variables 
		are encoded using dummy variables.
		To examine potential heterogeneity, we first fit a homogeneous AFT model using $T$ as the response and the
		nine covariates above and the treatment variable $z$ as predictors. 
		

		\begin{figure}[h]
			\centering
			\includegraphics[width=\linewidth]{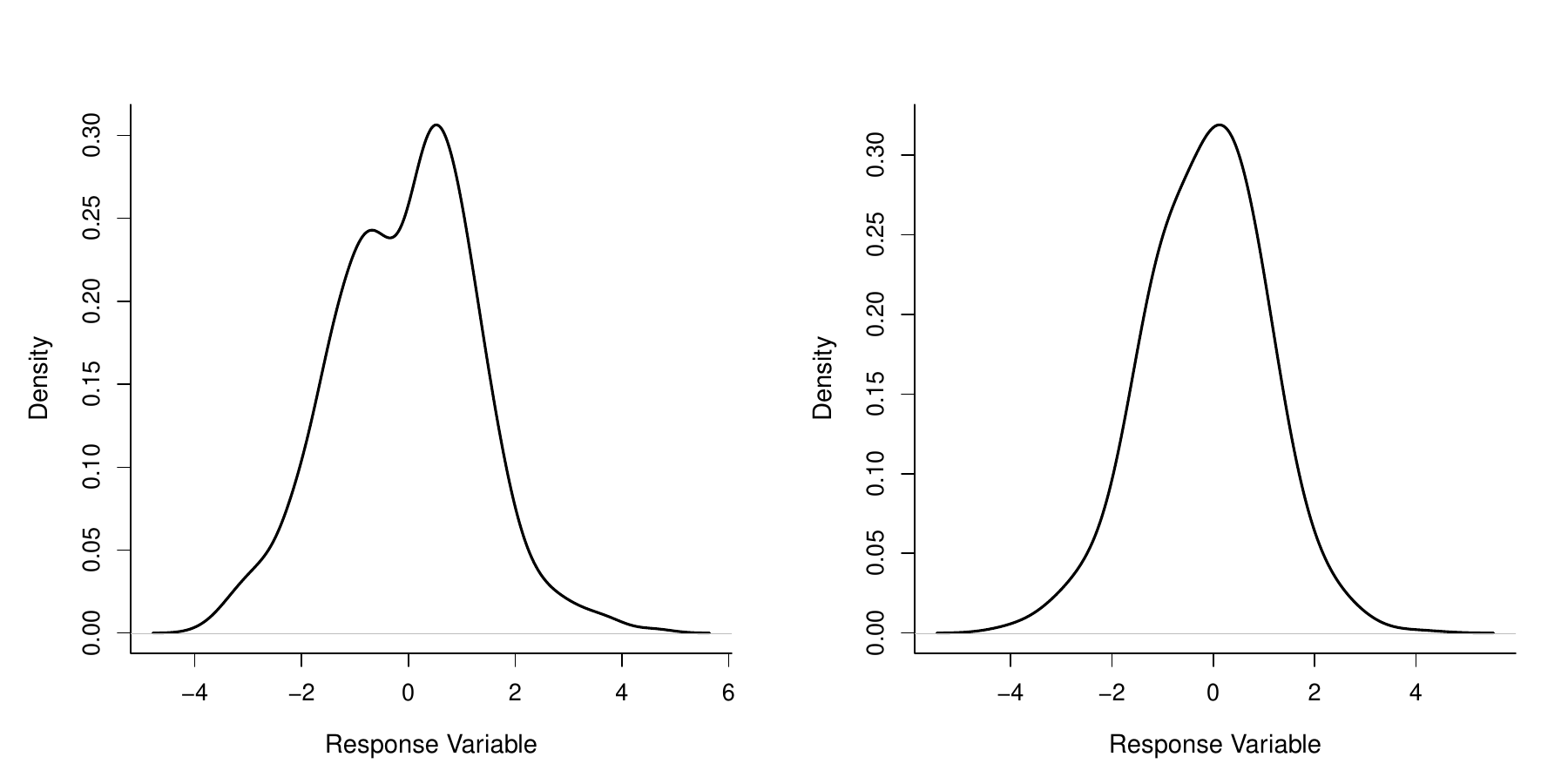}
			\caption{\normalsize 
					The kernel density plots of the residuals. Left: homogeneous AFT model; right: heterogeneous AFT model.}
			\label{realdataplot1}
		\end{figure}

		The kernel density plot of the resulting residuals is shown in the left panel of Figure~\ref{realdataplot1}. 
		The residual density plot clearly exhibits two modes, suggesting that heterogeneous effects may still be present even after adjusting for the covariates. 
		We then fit the heterogeneous AFT model $\log(T_i)=z_i\alpha_i+\bx_i^{\top}\bbeta+\varepsilon_i$, 
		where $T_i$, $z_i$, and $\bx_i$ are the copies of $T$, $z$, and $\bx=(x_1, \cdots, x_9)^{\top}$ for the $i$-th observation, respectively, and $\bbeta=(\beta_1, \cdots, \beta_9)^{\top}$. We detect potential heterogeneity using two methods: BJ-ADMM and RISA-ADMM, respectively.
		Notably, BJ-ADMM failed to detect any subgroups (i.e., $\widehat{K}=1$), whereas our proposed method, RISA-ADMM, identified $\widehat{K}=2$ subgroups. The finding obtained by our method RISA-ADMM is consistent with~\cite{pei2024latent}, who also reported $\widehat{K}=2$ subgroups. 
			The kernel density estimation of the residuals from our method, RISA-ADMM, is presented in the right panel of Figure 3. It can be seen that the residuals produced by our method exhibit a unimodal distribution. This pattern indicates that the proposed heterogeneous model better accommodates the latent subgroup structure and captures the underlying heterogeneity in the data.

		\begin{table*}[!ht]
			\centering
			\caption{The estimates, their standard errors (SE in parentheses), and $p$-values for testing parameter significance by BJ-ADMM and RISA-ADMM methods.}
			\label{realtable1}
			\begin{tabular*}{\textwidth}{@{\extracolsep{\fill}}lllll}
				\toprule
				& \multicolumn{2}{c}{BJ-ADMM} & \multicolumn{2}{c}{RISA-ADMM} \\
				\cmidrule(lr){2-3} \cmidrule(lr){4-5} 
				& Estimate (SE) & $p$-value & Estimate (SE) & $p$-value \\
				\midrule
				$\hat{\rho}_1$ (Subgroup 1)               & $5.552\times 10^{-5}$ ($1.493\times 10^{-5}$) & $<0.001$ & 1.125 (0.134) & $<0.001$ \\
				$\hat{\rho}_2$ (Subgroup 2)               & -- (--) & -- (--)      & -0.158 (0.093) & 0.089 \\
				$\hat{\beta}_1$ (Credit amount)           & 0.725 (0.060) & $<0.001$ & 0.557 (0.080) & $<0.001$ \\
				$\hat{\beta}_2$ (Present residence)       & 0.088 (0.051) & 0.085    & 0.083 (0.040) & 0.038 \\
				$\hat{\beta}_3$ (Installment rate)        & 0.602 (0.053) & $<0.001$ & 0.604 (0.081) & $<0.001$ \\
				$\hat{\beta}_4$ (Purpose)                 & -0.695 (0.106) & $<0.001$ & -1.412 (0.112) & $<0.001$ \\
				$\hat{\beta}_5$ (Saving account)          & -0.223 (0.108) & 0.039    & -1.191 (0.104) & $<0.001$ \\
				$\hat{\beta}_6$ (Other debtors)           & -0.112 (0.167) & 0.501    & -0.697 (0.165) & $<0.001$ \\
				$\hat{\beta}_7$ (Other installment plans) & 0.176 (0.125) & 0.160    & 0.362 (0.159) & 0.023 \\
				$\hat{\beta}_8$ (Housing)                 & 0.149 (0.085) & 0.082    & 0.145 (0.070) & 0.038 \\
				$\hat{\beta}_9$ (Telephone)               & 0.315 (0.110) & 0.004    & 0.979 (0.152) & $<0.001$ \\
				\bottomrule
			\end{tabular*}
		\end{table*}

		Table~\ref{realtable1} reports the estimated coefficients, their standard errors and the $p$-values for testing the significance of the coefficients by BJ-ADMM and RISA-ADMM.  
		As shown in Table~\ref{realtable1}, BJ-ADMM fails to identify variables present residence, other debtors, other installment plans, and housing as statistically significant at the $5\%$ level. By contrast, our method RISA-ADMM identifies all four variables as statistically significant at the same significance level.
		The results from RISA-ADMM are more reasonable. Intuitively, borrowers who have lived longer at their current residence tend to be more stable, a borrower’s housing situation signals financial stability, and borrowers with no other installment plan generally default later. Thus, variables present residence, other installment plans, and housing are associated with a longer time to default. On the other hand, the need for co-debtors may indicate weaker financial standing, as lenders often require co-debtors for higher-risk borrowers. In other words, the variable other debtors is associated with a shorter time to default. These findings suggest that accurately recovering heterogeneous structure help us
		identify important covariates that meaningfully affect the response.

		\section{Discussions}\label{section7}

		In this paper, we propose a new and robust subgroup analysis approach for 
		censored data under AFT models.  
		The proposed method accommodates censoring via inverse probability weighting and leverages concave pairwise fusion penalties within an M-estimation framework, allowing for automatic identification of latent subgroups without requiring prior knowledge of the underlying subgroup structure, while preserving high estimation accuracy. The effectiveness of the proposed method is validated by extensive simulation studies and a real data application.
		
			Although both covariate dimensions $p$ and $q$ are allowed to diverge with $n$ in our theoretical analysis, they remain smaller than the sample size $n$.
			It is interesting and challenging to extend our robust subgroup analysis approach to 
			the high-dimensional setting where
			both $p$ and $q$ can greatly exceed $n$. Our simulation study (Example 4) in the Supplementary Material shows that our method is insensitive to the robustification threshold $\tau$ in identifying the true number of subgroups, although the estimation accuracy of $\bbeta$ and $\brho$ may be affected by the choice of $\tau$. It would therefore be of interest to consider a data-driven choice of $\tau$ to further improve the estimation accuracy of the regression coefficients. 
			For example, one could investigate how to extend the idea of choosing a data-adaptive robustification threshold proposed in~\cite{sun2020Adaptive} to our subgroup analysis framework.
		These possible extensions
		are beyond the scope of the current paper and will be explored in future research.



\newpage

\setcounter{page}{1}
\setcounter{section}{0}

\setcounter{equation}{0}
\renewcommand{\thetable}{S\arabic{table}}
\renewcommand{\thefigure}{S\arabic{figure}}
\setcounter{table}{0}
\setcounter{figure}{0}



\begin{center}{\bf \Large Supplementary Material to ``Robust Subgroup Analysis for Heterogeneous Censored Data''}

	\bigskip
	
	Zhaohui Xu$^1$,  Daoji Li$^1$,  and Zemin Zheng$^1$
	
	$^1$ International Institute of Finance, School of
	Management, University of Science and
	Technology of China, Hefei, 230026, China\\
	
	$^2$Department of Information Systems and Decision Sciences, California State University, Fullerton, CA, 92831, United States

\end{center}


\setcounter{equation}{0}
\setcounter{table}{0}
\renewcommand{\theequation}{A.\arabic{equation}}
\renewcommand{\thetable}{S\arabic{table}}


This Supplementary Material consists of five parts. The first part presents the proof of Proposition~\ref{proposition1}. The second part provides the proofs of Theorems~\ref{theorem1} and~\ref{theorem2}. The third part reports results for homogeneous AFT models.
The fourth part examines the performance of the proposed method under different $\tau$
values and evaluates its sensitivity to the choice of $\tau$. The fifth part presents an additional simulation study with more groups and higher dimensions $p$ and $q$ to further assess the performance of the proposed method.

\section*{Appendix A: Proof of Proposition 1}\label{appendixA}

\renewcommand{\theequation}{A.\arabic{equation}}

First we define
\begin{align*}
	G^{(m+1)}=&\inf_{\footnotesize\bA\balpha^{(m+1)}-\bmeta=\bzero}\{\sum_{i=1}^n\frac{\Delta_i}{\widehat{S}(T_i^{*})}\rho\left(Y_i^{*}-\bz_i^{\top}\balpha_i-\bx_i^{\top}\bbeta\right)
	+ \sum_{1\le i< j\le n}p(\|\bet_{ij}\|, \lambda)\} \\
	=&\inf_{\footnotesize\bA\balpha^{(m+1)}-\bmeta=\bzero}L(\balpha^{(m+1)},\bbeta^{(m+1)},\bmeta,\bnu^{(m)}).
\end{align*}
It follows from the definition of $\bmeta^{(m+1)}$ in Section~\ref{section3} that
\begin{align*}
	L(\balpha^{(m+1)},\bbeta^{(m+1)},\bmeta^{(m+1)},\bnu^{(m)})\le	L(\balpha^{(m+1)},\bbeta^{(m+1)},\bmeta,\bnu^{(m)})
\end{align*}
for any $\bmeta$, and thus
\begin{eqnarray*}
	L(\balpha^{(m+1)},\bbeta^{(m+1)},\bmeta^{(m+1)},\bnu^{(m)})\le
	G^{(m+1)}.
\end{eqnarray*}	
Similarly, for any positive integer $t$, we can derive that
\begin{eqnarray*}
	L(\balpha^{(m+t)},\bbeta^{(m+t)},\bmeta^{(m+t)},\bnu^{(m+t-1)})\le
	G^{(m+t)}.
\end{eqnarray*}
Since the function $L(\balpha, \bbeta, \bmeta, \bnu)$ is differentiable with respect to $(\balpha,\bbeta)$ and is convex with respect to $\bmeta$, by applying the results in Theorem 4.1 of 
\cite{tseng2001convergence},
the sequence $(\balpha^{(m+1)},\bbeta^{(m+1)},\bmeta^{(m+1)})$ has a limit point, denoted by $(\balpha^*,\bbeta^*,\bmeta^*)$. Then we have
\begin{align*}
	G^*=&\lim_{m\to \infty}G^{(m+1)}=\lim_{m\to \infty}G^{(m+t)} \\
	=&
	\inf_{\footnotesize\bA\balpha^{*}-\bmeta^*=\bzero}\{\sum_{i=1}^n\frac{\Delta_i}{\widehat{S}(T_i^{*})}\rho\left(Y_i^{*}-\bz_i^{\top}\balpha_i^*-\bx_i^{\top}\bbeta^*\right)
	+ \sum_{1\le i< j\le n}p(\|\bmeta^*_{ij}\|, \lambda)\}.
\end{align*}
Moreover, by the definition of $\bnu$ in Section~\ref{section3} and some calculation, we have $\bnu^{(m+t-1)}=\bnu^{(m)}+\vartheta\sum_{i=1}^{t-1}(\bA\balpha^{(m+i)}-\bmeta^{(m+i)})$, thus
\begin{align*}
	&\quad	\lim_{m\to \infty}L(\balpha^{(m+t)},\bbeta^{(m+t)},\bmeta^{(m+t)},\bnu^{(m+t-1)})\\
	&=\sum_{i=1}^n\frac{\Delta_i}{\widehat{S}(T_i^{*})}\rho\left(Y_i^{*}-\bz_i^{\top}\balpha_i^*-\bx_i^{\top}\bbeta^*\right)
	+ \sum_{1\le i< j\le n}p(\|\bmeta^*_{ij}\|_2, \lambda) \\
	&\quad +\lim_{m\to \infty}\bnu^{(m)\top}(\bA\balpha^{*}-\bmeta^{*})
	+(t-\dfrac{1}{2})\vartheta\|\bA\balpha^{*}-\bmeta^{*}\|^2\\
	&\le G^{*}.
\end{align*}	
Therefore, $\lim_{m\to \infty}\|\br^{(m+1)}\|^2=\|\bA\balpha^{*}-\bmeta^{*}\|^2=0$.

\quad\quad Next, by the definition of $\balpha^{(m+1)}$, we have
\begin{align*}
	\partial L(\balpha^{(m+1)},\bbeta^{(m)},\bmeta^{(m)},\bnu^{(m)})/\partial \balpha=\bzero.
\end{align*}
Consequently,
\begin{align*}
	& \partial L(\balpha^{(m+1)},\bbeta^{(m)},\bmeta^{(m)},\bnu^{(m)})/\partial \balpha \\ =&\bZ^{\top}\brho(\balpha^{(m+1)},\bbeta^{(m)})+\bA^{\top}\bnu^{(m)}+\vartheta\bA^{\top}(\bA\balpha^{(m+1)}-\bmeta^{(m)})\\
	=&\bZ^{\top}\brho(\balpha^{(m+1)},\bbeta^{(m)})+\bA^{\top}(\bnu^{(m+1)}-\vartheta(\bA\balpha^{(m+1)}-\bmeta^{(m+1)})+\vartheta(\bA\balpha^{(m+1)}-\bmeta^{(m)}))\\
	=& \bZ^{\top}\brho(\balpha^{(m+1)},\bbeta^{(m)})+\bA^{\top}\bnu^{(m+1)}+\vartheta\bA^{\top}(\bmeta^{(m+1)}-\bmeta^{(m)}),
\end{align*}
where $\brho(\balpha^{(m+1)},\bbeta^{(m)})$ is a $n\times 1$ vector  and its $i$th element is $\frac{\Delta_i}{\widehat{S}(T_i^{*})}\dot{\rho}(Y_i^{*}-\bz_i^{\top}\balpha_i^{(m+1)}-\bx_i^{\top}\bbeta_i^{(m)})$. Therefore,
\begin{align*}
	\bs^{(m+1)}=\vartheta\bA^{\top}(\bmeta^{(m+1)}-\bmeta^{(m)})=-\bZ^{\top}\brho(\balpha^{(m+1)},\bbeta^{(m)})-\bA^{\top}\bnu^{(m+1)}.
\end{align*}
Since $\|\bA\balpha^{*}-\bmeta^{*}\|^2=0$, we have 
\begin{align*}
	&\quad\lim_{m\rightarrow \infty}\partial L(\balpha^{(m+1)},\bbeta^{(m)},\bmeta^{(m)},\bnu^{(m)})/\partial \balpha 
	=\lim_{m\rightarrow \infty}\bZ^{\top}\brho(\balpha^{(m+1)},\bbeta^{(m)})+\bA^{\top}\bnu^{(m+1)} \\
	=& \bZ^{\top}\brho(\balpha^*,\bbeta^*)+\bA^{\top}\bnu^*=\bzero.
\end{align*}
Thus, $\lim_{m\rightarrow \infty}\bs^{(m+1)}=\bzero$.

\phantomsection 
\label{appendixB}

\section*{Appendix B: Proofs of Theorems 1 and 2}\label{appendixB}

\setcounter{equation}{0}
\renewcommand{\theequation}{B.\arabic{equation}}

{\bf Proof of Theorem~\ref{theorem1}}. 
Define 
\begin{align*}
	\Phi_n(\bphi)=\sum_{i=1}^n\frac{\Delta_i}{\widehat{S}(T_i^{*})}\dot{\rho}(Y_i^{*}-\bU_i^{\top}\bphi)\bU_i \quad\mbox{and}\quad \Phi_n^S(\bphi)=\sum_{i=1}^n\frac{\Delta_i}{S(T_i^{*})}\dot{\rho}(Y_i^{*}-\bU_i^{\top}\bphi)\bU_i.
\end{align*}
Then by Condition 2, for every $r>0$, we have $\sup_{t<v}|\widehat{S}(t)-S(t)|=o(n^{-1/2+r})$, a.s.. This, together with Condition 4, implies that 
\begin{align}\label{th1-1}
	\sup_{\scriptsize{\bphi}}\|n^{-1}\Phi_n(\bphi)-n^{-1}\Phi_n^S(\bphi)\|=o(n^{-1/2+r}\sqrt{Kq+p}), \quad a.s..
\end{align}
Then under the condition
\begin{align*}
	\mathbb{P}(\lim_{n\rightarrow \infty} \{\inf_{\|\scriptsize{\bphi-\bphi_0}\|\ge n^{-\gamma}}\|\Phi_n^S(\bphi)\| \}/(n^{1/2+r}\sqrt{Kq+p})=\infty)=1, 
\end{align*}
we have
\begin{align*}
	\mathbb{P}(\Phi_n(\bphi)\ \mbox{have a zero point on}\ \|\bphi-\bphi_0\|\ge n^{-\gamma} \ \mbox{for large}\ n)=0.  
\end{align*}
Note that
\begin{align*}
	n^{-1}\Phi_n^S(\bphi)
	=& n^{-1}\Phi_n^S(\bphi_0)+n^{-1}\sum_{i=1}^n\frac{\Delta_i}{S(T_i^{*})}\ddot{\rho}(Y_i^{*}-\bU_i^{\top}\bphi_0)\bU_i\bU_i^{\top}(\bphi-\bphi_0) \\
	&\quad +o(n^{-1}\Lambda_{\max}(\bU^{\top}\bU)\|\bphi-\bphi_0\|).
\end{align*}
It follows from $\Phi_n(\widehat{\bphi}^{or})=0$ and the conditions $\Lambda_{\max}(\bU^{\top}\bU)\le c_4 n$ that  
\begin{align}\label{th1-2}
	\quad\sup_{\|\scriptsize{\widehat{\bphi}^{or}-\bphi_0}\|<n^{-\gamma}}\|n^{-1}\Phi_n^S(\bphi_0)+n^{-1}\bV_n(\widehat{\bphi}^{or}-\bphi_0)\|
	=o(\max\{n^{-\gamma},n^{-1/2+r}\sqrt{Kq+p}\}),
\end{align} 
where $\bV_n=\sum_{i=1}^n\frac{\Delta_i}{S(T_i^{*})}\rho''(Y_i^{*}-\bU_i^{\top}\bphi_0)\bU_i\bU_i^{\top}$. Since $\mathbb{E}[{\Phi_n^S(\bphi_0)}]=0$, and $\|V_n^{-1}\|\lesssim G_{\min}^{-1}$, we have
\begin{align*}
	\|\widehat{\bphi}^{or}-\bphi_0\|=o(\xi_n) \quad a.s.
\end{align*}
and
\begin{align*}
	\|\widehat{\brho}^{or}-\brho_0\|=\|\widehat{\bbeta}^{or}-\bbeta_0\|=o(\xi_n) \quad a.s.,
\end{align*}
where 
$\xi_n=\max\{n^{1-\gamma}/G_{\min},n^{1/2+r}\sqrt{Kq+p}/G_{\min}\}$. 
Moreover, 
\begin{align*}
	\|\widehat{\balpha}^{or}-\balpha\|^2=\sum_{k=1}^{K}\sum_{i \in G_k}\|\widehat{\brho}^{or}_{k}-\brho_{0k}\|^2\le G_{\max}\|_2\widehat{\brho}^{or}-\brho\|^2 
	=o(G_{\max}\xi^2_n) \quad a.s.,
\end{align*}
and
\begin{align*}
	\sup_{i}\|\widehat{\balpha}^{or}_{i}-\balpha_{0i}\|=\sup_{k}\|\widehat{\brho}^{or}_{k}-\brho_{0k}\|\le\|\widehat{\brho}^{or}-\brho_0\|=o(\xi_n) \quad a.s..
\end{align*}

Next, it follows from 
\eqref{th1-2}
that
\begin{align*}
	\widehat{\bphi}^{or}-\bphi_0=-\bV_n^{-1}\Phi_n^S(\bphi_0)+o(\xi_n) 
	=\sum_{i=1}^{n}\bV_n D_i(\bphi_0)+o(\xi_n),
\end{align*}
where $D_i(\bphi_0)=\dfrac{\Delta_i}{S(T_i^{*})}\rho'(Y_i^{*}-\bU_i^{\top}\bphi_0)\bU_i$. Then we verify the Lindeberg-Feller condition. It follows that
\begin{align*}
	\mathbb{E}\|\bV_n^{-1}D_i(\bphi_0)\|^4
	=& \mathbb{E}\{D_i(\bphi_0)^{\top}\bV_n^{-1}\bV_n^{-1}D_i(\bphi_0)\}^2 
	\le \|\bV_n^{-1}\|^4E\{D_i(\bphi_0)^{\top}D_i(\bphi_0)\}^2 \\
	=& O((Kq+p)^2/G_{\min}^{4}),
\end{align*}
and
\begin{align*}
	\mathbb{P}(\|\bV_n^{-1}D_i(\bphi_0)\|>\epsilon)\le \|\bV_n^{-1}\|^2E\|D_i(\bphi_0)\|^2/\epsilon^2
	=O((Kq+p)/(G_{\min}\epsilon)^{2})
\end{align*}
Thus, with the condition $\xi_n\rightarrow 0$, we have
\begin{align*}
	&\quad\sum_{i=1}^{n}\mathbb{E}\|\bV_n^{-1}D_i(\bphi_0)\|^2I(\|\bV_n^{-1}D_i(\bphi_0)\|>\epsilon)
	=n\mathbb{E}\|\bV_n^{-1}D_1(\bphi_0)\|^2I(\|\bV_n^{-1}D_1(\bphi_0)\|>\epsilon)\\
	\le & n\{\mathbb{E}\|\bV_n^{-1}D_1(\bphi_0)\|^4\}^{1/2}\{\mathbb{P}(\|\bV_n^{-1}D_1(\bphi_0)\|>\epsilon)\}^{1/2}
	=O(n(Kq+p)^{3/2}/G_{\min}^3)\rightarrow 0.
\end{align*}
Note that $\sum_{i=1}^{n}\var\{\bV_nD_i(\bphi_0)\}=\mathbb{E}(\bV_n^{-1}\bSigma_n\bV_n^{-1})$. Then it follows from the Lindeberg-Feller central limit theorem that
\begin{align*}
	\ba_n^{\top} \bmathcalV_n^{-1/2}(\widehat{\bphi}^{or}-\bphi_0)\rightarrow N(0,1)
\end{align*}
for any unit column vector $\ba_n\in\mathbb{R}^{p+Kq}$
as $n\to\infty$. 
This completes the proof of Theorem~\ref{theorem1}.


\vspace{2mm}

\noindent{\bf Proof of Theorem~\ref{theorem2}}.  The following proof is conditional on the event such that the results in Theorem~\ref{theorem1} hold. Before the proof of Theorem~\ref{theorem2}, we need to introduce a few additional notations used subsequently in the proof. Denote
\begin{align*}
	&F_n(\balpha,\bbeta)=\sum_{i=1}^n\frac{\Delta_i}{\widehat{S}(T_i^{*})}\rho(Y_i^{*}-\bz_i^{\top}\balpha_i-\bx_i^{\top}\bbeta),\\
	& L_n(\balpha)=\lambda\sum_{1\le i<j\le n}\varrho(\|\balpha_i-\balpha_j\|),\\
	&F_n^{\mathcal{G}}(\brho,\bbeta)=\sum_{i=1}^K\sum_{j\in G_i}\frac{\Delta_j}{\widehat{S}(T_j^{*})}\rho(Y_j^{*}-\bz_j^{\top}\brho_i-\bx_j^{\top}\bbeta),\\ &L_n^{\mathcal{G}}(\brho)=\lambda\sum_{1\le k< k'\le K}|G_{k}\|G_{k'}|\varrho(\|\brho_k-\brho_{k'}\|),
\end{align*}
and let $Q_n(\balpha,\bbeta)=F_n(\balpha,\bbeta)+L_n(\balpha)$, $Q_n^{\mathcal{G}}(\brho,\bbeta)=F_n^{\mathcal{G}}(\brho,\bbeta)+L_n^{\mathcal{G}}(\brho)$. We define the mappings $\Re:\Omega \to \mathbb{R}^{Kq}$ and $\Re^*:\mathbb{R}^{nq}\to \mathbb{R}^{Kq}$ such that $\Re(\bmu)$ is the $Kq\times1$ vector consisting of $K$ vectors with dimension $q$ and its $k^{th}$ vector equals to the common value of $\balpha_i$ for $i \in G_{k}$, and $\Re^*(\balpha)=\{|G_{k}|^{-1}\sum_{i\in G_{k}}\balpha_i^{\top},k=1,\dots,K\}^{\top}$. Denote the neighborhood of $(\balpha_0,\bbeta_0)$:
\begin{eqnarray*}
	\Theta=\{\balpha \in R^{nq},\bbeta \in R^p:\sup_i\|\balpha_i-\balpha_{0i}\|\le \xi_n,\|\bbeta-\bbeta_0\| \le \xi_n\},
\end{eqnarray*}	
where $\xi_n$ is given in Theorem~\ref{theorem1}.

By some simple calculation, for every $\balpha \in \Omega$, we have $\Re(\balpha)=\Re^*(\balpha)$, and $L_n(\balpha)=L_n^{\mathcal{G}}(\Re(\balpha))$. For every $\brho \in R^{Kq}$, we have $L_n(\Re^{-1}(\brho))=L_n^{\mathcal{G}}(\brho)$. Hence it is easy to obtain that
\begin{eqnarray}\label{th2-1}
	Q_n(\balpha,\bbeta)=Q_n^{\mathcal{G}}(\Re(\balpha),\bbeta),\quad Q_n^{\mathcal{G}}(\brho,\bbeta)=Q_n(\Re^{-1}(\brho),\bbeta).
\end{eqnarray}
From Theorem~\ref{theorem1}, we know that there exists an event $E_1$ and on the event $E_1$,
\begin{eqnarray*}
	\sup_i\|\widehat{\balpha}^{or}_i-\balpha_{0i}\|\le \xi_n,\|\widehat{\bbeta}^{or}-\bbeta_0\| \le \xi_n
\end{eqnarray*}
with $P(E_1)$ approaching 1 as $n\rightarrow \infty$, thus $((\widehat{\balpha}^{or})^{\top},(\widehat{\bbeta}^{or})^{\top})^{\top}\in \Theta$ on the event $E_1$. To ease readability, we will finish the proof in two steps to show that $((\widehat{\balpha}^{or})^{\top},(\widehat{\bbeta}^{or})^{\top})^{\top}$ is a strictly local minimizer of $Q_n(\balpha,\bbeta)$ with probability approaching one.

\medskip

\noindent \textbf{Step 1}. We show that on the event $E_1$, $Q_n(\balpha^*,\bbeta)>Q_n(\widehat{\balpha}^{or},\widehat{\bbeta}^{or})$ for any $((\balpha^*)^{\top},(\bbeta)^{\top})^{\top}\in \Theta$ and $((\balpha^{*})^{\top},(\bbeta)^{\top})^{\top}	\ne ((\widehat{\balpha}^{or})^{\top},(\widehat{\bbeta}^{or})^{\top})^{\top}$, where $\balpha^*=\Re^{-1}(\Re^*(\balpha))$. Let $\Re^*(\balpha)=\brho=(\brho_1^{\top},\dots,\brho_K^{\top})^{\top}$. Since
\begin{align*}
	&\sup_k\|\brho_k-\brho_{0k}\|^2=\sup_{k}\bigg|\bigg\|G_{k}|^{-1}\sum_{i\in G_{k}}\balpha_i-\brho_{0k}\bigg|\bigg|^2\\
	&=\sup_{k}\bigg|\bigg|\sum_{i\in G_{k}}(\balpha_i-\balpha_{0i})/|G_{k}|\bigg|\bigg|^2
	=\sup_{k}|G_k|^{-2}\bigg|\bigg|\sum_{i\in G_{k}}(\balpha_i-\balpha_{0i})\bigg|\bigg|^2\\
	&\le\sup_{k}|G_k|^{-1}\sum_{i\in G_{k}}\|(\balpha_i-\balpha_{0i})\|^2
	\le\sup_i\|\balpha_i-\balpha_{0i}\|^2\le \xi_n^2,
\end{align*}
then for all $k \ne k^{'}$, we have
\begin{eqnarray*}
	\|\brho_k-\brho_{k^{'}}\|\ge\|\brho_{0k}-\brho_{0k'}\|-2\sup_k\|\brho_k-\brho_{0k}\|\ge b_n-2\xi_n>c \lambda,
\end{eqnarray*}
where the last inequality follows from the assumption that $b_n>c\lambda\gg \xi_n$. Then by Condition 1, $\varrho(\|\brho_k-\brho_{k^{'}}|)$ is a constant, and we have $L^{\mathcal{G}}_n(\Re^*(\balpha))=C_n$ for some constant $C_n$ which does not depend $\balpha$. Hence $Q^{\mathcal{G}}_n(\Re^*(\balpha),\bbeta)=F^{\mathcal{G}}_n(\Re^*(\balpha),\bbeta)+C_n$ for all $(\balpha^{\top},\bbeta^{\top})^{\top} \in \Theta$.

Since $((\widehat{\brho}^{or})^{\top},(\widehat{\bbeta}^{or})^{\top})^{\top}$ is the unique global minimizer of $F^{\mathcal{G}}(\brho,\bbeta)$, then we have $F^{\mathcal{G}}(\Re^*(\balpha),\bbeta)>F^{\mathcal{G}}(\widehat{\brho}^{or},\widehat{\bbeta}^{or})$, thus $Q^\varsigma_n(\Re^*(\balpha),\bbeta)>Q^{\mathcal{G}}(\widehat{\brho}^{or},\widehat{\bbeta}^{or})$ for all $(\Re^{*}(\balpha)^{\top},\bbeta^{\top})^{\top} \ne ((\widehat{\brho}^{or})^{\top},(\widehat{\bbeta}^{or})^{\top})^{\top}$. By (B.2), we have $Q^{\mathcal{G}}(\Re^*(\balpha),\bbeta)=Q_n(\Re^{-1}(\Re^*(\balpha)),\bbeta)=Q_n(\balpha^*,\bbeta)$, and $Q^{\mathcal{G}}(\widehat{\brho}^{or},\widehat{\bbeta}^{or})=Q_n(\widehat{\balpha}^{or},\widehat{\bbeta}^{or})$. Therefore, $Q_n(\balpha^*,\bbeta)>Q_n(\widehat{\balpha}^{or},\widehat{\bbeta}^{or})$ for all $\balpha^* \ne \widehat{\balpha}^{or}$.
\medskip

\noindent \textbf{Step 2}. We show that there exists an event $E_2$ with $P(E_2)=1$ such that on $E_1 \cap E_2$, there is a neighborhood of $(\widehat{\balpha}^{or},\widehat{\bbeta}^{or})$, denoted by $\Theta_n$, we have  $Q_n(\balpha,\bbeta)\ge Q_n(\balpha^*,\bbeta)$ for any $(\balpha^{\top},\bbeta^{\top})^{\top} \in \Theta_n \cap \Theta$ for sufficiently large n.

For a positive sequence $t_n$, let $\Theta_n=\{\balpha_i:\sup_i\|\balpha_i-\widehat{\balpha}^{or}_i\|\le t_n\}$. By Taylor's expansion, we have
\begin{eqnarray}\label{th2-2}
	Q_n(\balpha,\bbeta)-Q_n(\balpha^*,\bbeta)=\Gamma_1+\Gamma_2
\end{eqnarray}
for $(\balpha^{\top},\bbeta^{\top})^{\top}\in \Theta_n \cap \Theta$, where
\begin{eqnarray*}
	\Gamma_1&=&-\sum_{i=1}^n\dfrac{\Delta_i}{\widehat{S}(T_i^*)}\dot{\rho}(Y_i^*-\bz_i^{\top}\balpha_i-\bx_i^{\top}\bbeta)\bz_i^{\top}(\balpha_i-\balpha_i^*),\\
	\Gamma_2&=&\sum_{i=1}^{n}\dfrac{\partial L_n(\balpha^m)}{\partial \balpha_i^{\top}}(\balpha_i-\balpha^*_i),
\end{eqnarray*}
and $\balpha^m=m\balpha+(1-m)\balpha^*$ for some constant $m\in(0,1)$. We establish and simplify the bounds for $\Gamma_1$ and $\Gamma_2$, respectively. 


First, following similar arguments in \cite{ma2020exploration}, it is easy to obtain
\begin{align}\label{th2-3}
	\Gamma_2\ge \sum_{k=1}^K\sum_{i,j\in G_k,i<j} \lambda\dot{\varrho}(4t_n)\|\balpha_i-\balpha_j\|. 
\end{align}
Next, for $\Gamma_1$, by setting 
\begin{align*}
	\bQ_i=\dfrac{\Delta_i}{\widehat{S}(T_i^*)}\dot{\rho}(Y_i^*-\bz_i^{\top}\balpha_i-\bx_i^{\top}\bbeta)\bz_i,  
\end{align*}
and $\bQ=(\bQ_1^{\top},\dots,\bQ_n^{\top})^{\top}$, we have
\begin{align*}
	\Gamma_1=-\sum_{k=1}^K\sum_{i,j\in G_k,i<j}\dfrac{(\bQ_j-\bQ_i)^{\top}(\balpha_j-\balpha_i)}{|G_k|}.
\end{align*}
It follows from the Conditions 2-4 that $\sup_i\|\bQ_i\|< c_2\sqrt{q}$. Thus we have 
\begin{align}\label{th2-4}
	\bigg|\dfrac{(\bQ_j-\bQ_i)(\balpha_j-\balpha_i)}{|G_k|}\bigg| \le 2G_{\min}^{-1}\sup_i\|\bQ_i\|\|\balpha_i-\balpha_j\|
	\le 2c_2G_{\min}^{-1}\sqrt{q}\|\balpha_i-\balpha_j\|.
\end{align}
Therefore, 
by \eqref{th2-2}-\eqref{th2-4},
we have
\begin{align*}
	Q_n(\balpha,\bbeta)-Q_n(\balpha^*,\bbeta)\ge \sum_{k=1}^K\sum_{i,j\in G_k,i<j}\{\lambda\dot{\varrho}(4t_n)-2c_2G_{\min}^{-1}\sqrt{q}\}\|\balpha_i-\balpha_j\|.
\end{align*}
Letting $t_n\to 0$ yields  $
\dot{\varrho}(4t_n)\to 1$. 
Since $\lambda\gg G_{\min}^{-1}\sqrt{q}$, we have $Q_n(\balpha,\bbeta)-Q_n(\balpha^*,\bbeta)\ge 0$ for sufficiently large $n$, which completes the proof of Theorem~\ref{theorem2}.

\section*{Appendix C: Homogeneous AFT models}

In this section, we consider the homogeneous AFT model, which is given by
\begin{align*}
	\log(T_i)=\bz_i^{\top}\brho+\bx_i^{\top}\bbeta+\varepsilon_i,\quad i=1,\dots,n.
\end{align*}
It can be seen that this model is a special case of model \eqref{eq: Heterogeneous-AFT} with $\balpha_1=\dots=\balpha_n=\brho$ and $K=1$. We also define the oracle estimator for $(\bbeta,\brho)$ as 
\begin{align*}
	\widehat{\bphi}^{or}=(\widehat{\bbeta}^{or{\top}},\widehat{\brho}^{or{\top}})^{\top}=\mathop{\arg\min}_{\footnotesize{\bphi\in \mathbb{R}^{p+q}}}\sum_{i=1}^n\frac{\Delta_i}{\widehat{S}(T_i^{*})}\rho(Y_i^{*}-\bU_i^{*{\top}}\bphi),
\end{align*}
where $\bU^{*}_i=(\bx_i^{\top},\bz_i^{\top})^{\top}$. Let $\widehat{\balpha}^{or}=(\widehat{\balpha}_1^{or{\top}},\dots,\widehat{\balpha}_n^{or{\top}})^{\top}$ with $\widehat{\balpha}^{or}_i=\widehat{\brho}^{or}$ for all $i$. Let $\bphi_0=(\bbeta_0^{\top},\brho_0^{\top})^{\top}$ be the true coefficient vector. Define
\begin{align*}
	\bV^*_n=\sum_{i=1}^n\frac{\Delta_i}{S(T_i^{*})}\ddot{\rho}(Y_i^{*}-{\bU^{*}_i}^{\top}\bphi_0)\bU_i^*{\bU^{*}_i}^{\top}\quad\mbox{and}\quad
	\bSigma^*_n=\sum_{i=1}^{n}\frac{\Delta_i}{S^2(T_i^{*})}\{\dot{\rho}(Y_i^{*}-{\bU^{*}_i}^{\top}\bphi_0)\}^2\bU_i^*{\bU^{*}_i}^{\top}.
\end{align*}
Finally, denote by $\bmathcalV_n^{*}=E(\bV_n^{*-1}\bSigma_n^{*}\bV_n^{*-1})$. Based on the composition of $\bU^{*}=(\bU_1^{*{\top}},\dots,\bU_n^{*{\top}})^{\top}$, we correspondingly decompose $\bmathcalV_n^{*}$ as
\begin{align*}
	\bmathcalV_n^{*}=\left(
	{ \begin{array}{cc}
			\bmathcalV_{n11}^{*}&\bmathcalV_{n12}^{*}\\
			\bmathcalV_{n21}^{*}&\bmathcalV_{n22}^{*}\\
	\end{array} }
	\right),
\end{align*}
where $ \bmathcalV_{n11}^{*}$ is a $p\times p$ matrix.  
To establish our theoretical results for homogeneous AFT models, we replace Condition 4 with Condition 5.

\begin{condition}\label{c4*}
	Assume $\sup_i\|\bx_i\|\le c_1\sqrt{p}$, $\sup_i\|\bz_i\|\le c_2 \sqrt{q}$, and $\Lambda_{\min}({\bU^*}^{\top}\bU^*)\ge c_3 n$, and $\Lambda_{\max}({\bU^*}^{\top}\bU^*)\le c_4 n$, for some constants $c_1, c_2, c_3,c_4\in(0,\infty)$. 
\end{condition}
Let $	\Phi_n^{*S}(\bphi)=\sum_{i=1}^n\left[\Delta_i/S(T_i^{*})\right]\dot{\rho}(Y_i^{*}-{\bU^{*}_i}^{\top}\bphi)\bU_i^*$.
Then we have the following results.


\begin{theo}\label{theorem3}
	Under Conditions 1-3 and 5, assume that
	\begin{align*}
		\mathbb{P}(\lim_{n\rightarrow \infty} \{\inf_{\|\footnotesize{\bphi-\bphi_0}\|\ge n^{-\gamma}}\|\Phi_n^{*S}(\bphi)\| \}/(n^{1/2+r}\sqrt{q+p})=\infty)=1,  
	\end{align*}
	with $0<r<1/2$ and $\gamma>0$, we have
	\begin{align*}
		&\|((\widehat{\bbeta}^{or}-\bbeta_0)^{\top},(\widehat{\brho}^{or}-\brho_0)^{\top})^{\top}\|=o(\xi'_{n})\quad a.s.,\notag\\
		&\sup_{i}\|\widehat{\balpha}^{or}_i-\balpha_{0i}\|=o(\xi'_n)\quad a.s., 
	\end{align*}
	where 
	$\xi'_{n}=\max\{n^{-\gamma},n^{-1/2+r}\sqrt{q+p}\}$. 
	And if $\xi'_n\rightarrow 0$, then for any unit column vector $\ba_n\in\mathbb{R}^{p+q}$, we have 
	\begin{align*}
		\ba_n^{\top} \bmathcalV_n^{*-1/2}(\widehat{\bphi}^{or}-\bphi_0)\xrightarrow{d} N(0,1).
	\end{align*}
	Moreover, if $b_n>c\lambda$ with 
	$\lambda\gg \xi_n'$, 
	there exists a local minimizer $(\widehat{\balpha}(\lambda),\widehat{\bbeta}(\lambda))$ of the object function $Q_n(\balpha,\bbeta,\lambda)$ given in (2.3) that satisfies
	\begin{align*}
		\mathbb{P}((\widehat{\balpha}(\lambda),\widehat{\bbeta}(\lambda))=(\widehat{\balpha}^{or},\widehat{\bbeta}^{or}))\rightarrow 1.
	\end{align*}
\end{theo}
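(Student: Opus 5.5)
The plan is to specialize the proofs of Theorems~\ref{theorem1} and~\ref{theorem2} to $K=1$, with $\bU^*$ in place of $\bU$ and Condition~\ref{c4*} in place of Condition~\ref{c4}. The only structural change is that $G_{\min}=G_{\max}=n$, so the rate $\xi_n$ becomes $\xi_n'=\max\{n^{-\gamma},n^{-1/2+r}\sqrt{q+p}\}$.

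\textbf{Step 1 (consistency of the oracle estimator).} Set $\Phi_n^*(\bphi)=\sum_i[\Delta_i/\widehat S(T_i^*)]\dot\rho(Y_i^*-\bU_i^{*\top}\bphi)\bU_i^*$.
\begin{enumerate}[(i)]
\item By Condition~\ref{c2}, $\widehat S$ is uniformly close to $S$ with $\sup_{t\le\nu}|\widehat S(t)-S(t)|=o(n^{-1/2+r})$ a.s., and $S$ is bounded below on $[0,\nu]$.
\item Combining this with bounded $\dot\rho$ (Condition~\ref{c3}) and $\|\bU_i^*\|\lesssim\sqrt{p+q}$ gives $\sup_{\bphi}\|\Phi_n^*(\bphi)-\Phi_n^{*S}(\bphi)\|=o(n^{1/2+r}\sqrt{q+p})$ a.s.
\item The assumed divergence of $\Phi_n^{*S}$ outside the ball $\|\bphi-\bphi_0\|\ge n^{-\gamma}$ then rules out zeros of $\Phi_n^*$ there for large $n$. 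Hence $\|\widehat\bphi^{or}-\bphi_0\|<n^{-\gamma}$ eventually.
\item Inside that ball, expand $\Phi_n^{*S}$ around $\bphi_0$ using the bounded $\ddot\rho$ and $\Lambda_{\max}(\bU^{*\top}\bU^*)\le c_4n$. This gives $n^{-1}\Phi_n^{*S}(\bphi_0)+n^{-1}\bV_n^*(\widehat\bphi^{or}-\bphi_0)=o(\xi_n')$.
\item With $\|\bV_n^{*-1}\|\lesssim n^{-1}$ (from $\Lambda_{\min}\ge c_3n$ and a positive lower bound on $\ddot\rho$ weighted by the IPW factors), and $n^{-1}\Phi_n^{*S}(\bphi_0)=o(n^{-1/2+r}\sqrt{p+q})$ a.s. by a SLLN-type bound on bounded mean-zero summands, we get $\|\widehat\bphi^{or}-\bphi_0\|=o(\xi_n')$.
\item Since $\widehat\balpha_i^{or}=\widehat\brho^{or}$ for every $i$, the $\sup_i$ bound follows immediately.
\end{enumerate}
The mean-zero property of $\Phi_n^{*S}(\bphi_0)$ uses $\mathbb{E}[\Delta_i/S(T_i^*)\mid T_i]=1$ together with $\mathbb{E}\dot\rho(\varepsilon_i)=0$.

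\textbf{Step 2 (asymptotic normality).} From Step 1, $\widehat\bphi^{or}-\bphi_0=-\bV_n^{*-1}\sum_i D_i^*(\bphi_0)+o(\xi_n')$, where the $D_i^*$ are i.i.d. mean-zero score terms. I would verify the Lindeberg condition through fourth moments:
\[
\mathbb{E}\|\bV_n^{*-1}D_i^*\|^4=O\big((p+q)^2/n^4\big), \qquad \mathbb{P}\big(\|\bV_n^{*-1}D_i^*\|>\epsilon\big)=O\big((p+q)/(n\epsilon)^2\big).
\]
These give a Lindeberg sum of order $(p+q)^{3/2}/n^2\to0$ whenever $\xi_n'\to0$. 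Projecting onto $\ba_n^\top\bmathcalV_n^{*-1/2}$ yields the $N(0,1)$ limit. The main delicate point is showing the remainder stays negligible after normalization by $\bmathcalV_n^{*-1/2}$, whose scale is of order $n^{1/2}$; I would lean on the same device as in Theorem~\ref{theorem1}.

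\textbf{Step 3 (oracle property).} I would repeat the two-step argument of Theorem~\ref{theorem2} with $K=1$.
\begin{enumerate}[(i)]
\item With one group, $L_n^{\mathcal G}\equiv0$ and $\Omega$ is the set of constant $\balpha$. So in the neighborhood $\Theta$ (radius $\xi_n'$), the averaged point $\balpha^*=\Re^{-1}(\Re^*(\balpha))$ satisfies $Q_n(\balpha^*,\bbeta)>Q_n(\widehat\balpha^{or},\widehat\bbeta^{or})$ by uniqueness of the oracle minimizer. The condition $b_n>c\lambda$ is vacuous here.
\item For general $\balpha$ near $\widehat\balpha^{or}$, write $Q_n(\balpha,\bbeta)-Q_n(\balpha^*,\bbeta)=\Gamma_1+\Gamma_2$ as before. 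Use $\Gamma_2\ge\lambda\dot\varrho(4t_n)\sum_{i<j}\|\balpha_i-\balpha_j\|$ and $|\Gamma_1|\le 2c_2 n^{-1}\sqrt q\sum_{i<j}\|\balpha_i-\balpha_j\|$, the latter from bounded weights and bounded $\dot\rho$.
\item This requires $\lambda\gg n^{-1}\sqrt q$, which holds because $\lambda\gg\xi_n'\ge n^{-1/2+r}\sqrt{q}$. Hence the oracle is a local minimizer with probability tending to one.
\end{enumerate}
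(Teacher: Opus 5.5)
\textbf{Overall.} The paper gives no separate proof of Theorem~\ref{theorem3}. It is meant as the $K=1$, $G_{\min}=G_{\max}=n$ specialization of the proofs of Theorems~\ref{theorem1} and~\ref{theorem2}, which is exactly your plan, and your Steps~1 and~3 carry it out at the paper's level of rigor.

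\textbf{The gap in Step~2.} The genuine gap is where you flagged it: the proof of Theorem~\ref{theorem1} contains no device that controls the remainder after normalization, so there is nothing to lean on. That proof keeps a remainder $o(\xi_n)$ and checks the Lindeberg condition only for the unnormalized summands $\bV_n^{-1}D_i(\bphi_0)$ (your $\bV_n^{*-1}D_i^*$). Their total variance here is $O((p+q)/n)\to0$, so on that scale the check is vacuous and yields no Gaussian limit. The Lindeberg part is easy to repair: work with $\ba_n^{\top}\bmathcalV_n^{*-1/2}\bV_n^{*-1}D_i^*$, which are $O(\sqrt{(p+q)/n})$ once $\Lambda_{\min}(\bmathcalV_n^*)\gtrsim n^{-1}$. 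But then the remainder must be $o_p(n^{-1/2})$, whereas Step~1 only gives $o(\xi_n')$, and $n^{1/2}\xi_n'\ge n^{r}\sqrt{q+p}\to\infty$.

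\textbf{Why this is more than a loose bound.} The remainder contains the Kaplan--Meier substitution term
$\Phi_n^{*}(\bphi_0)-\Phi_n^{*S}(\bphi_0)\approx-\sum_i\Delta_i\{\widehat S(T_i^*)-S(T_i^*)\}S(T_i^*)^{-2}\dot\rho(\varepsilon_i)\bU_i^*$.
Since $\widehat S-S$ is of exact order $n^{-1/2}$ and $\mathbb{E}[\dot\rho(\varepsilon)\bU^*I(T\ge t)]\neq0$ in general, this term is of exact order $\sqrt n$, the same as the leading score. It adds the usual Kaplan--Meier influence-function correction to the limiting variance, which the known-$S$ sandwich $\bmathcalV_n^*$ omits. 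Closing Step~2 therefore requires expanding $\widehat S$ through its martingale representation, and then either imposing conditions that remove this term or adding it to the variance.

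The linearization remainder from your Step~1(iv) must likewise be $o_p(n^{1/2})$ at the score level. For the Huber loss, whose $\ddot\rho$ jumps at $\pm\tau$, boundedness of $\ddot\rho$ does not give this. You need a bounded density for $\varepsilon$ and a stochastic-equicontinuity bound.

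\textbf{Two smaller points inherited from the paper.}
First, $\|\bV_n^{*-1}\|\lesssim n^{-1}$ cannot come from a lower bound on IPW-weighted $\ddot\rho$: censored observations have weight zero, and $\ddot\rho(x)=0$ for $|x|>\tau$. So this bound has to be imposed as an assumption.
Second, under Condition~\ref{c2}, $\mathbb{E}[\Delta_i/S(T_i^*)\mid T_i]=1$ holds only on $\{T_i<\nu\}$. The mean-zero property of $\Phi_n^{*S}(\bphi_0)$ therefore also needs $\mathbb{P}(T\le\nu)=1$.
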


\begin{coll}
	Under the conditions in Theorem~\ref{theorem3}, as $n\rightarrow \infty$, $\ba_n^{\top} \bmathcalV_n^{*-1/2}(\widehat{\bphi}-\bphi_0)\xrightarrow{d} N(0,1)$ for any unit column vector $\ba_n\in\mathbb{R}^{p+q}$. As a result, we have $\ba_{n1}^{\top}\bmathcalV_{n11}^{*-1/2}(\widehat{\bbeta}-\bbeta_0)\xrightarrow{d} N(0,1)$ and $\ba_{n2}^{\top}\bmathcalV_{n22}^{*-1/2}(\widehat{\brho}-\brho_0)\xrightarrow{d} N(0,1)$, where
	$\ba_{n1}\in \mathbb{R}^{p}$ and $\ba_{n2}\in \mathbb{R}^{q}$ are two unit column vectors.
\end{coll}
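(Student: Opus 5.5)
The statement to prove is the final Corollary (the homogeneous case, following Theorem~\ref{theorem3}). I would transfer the asymptotic normality of the oracle estimator to the penalized estimator through a probability-one coincidence argument, exactly as the main-text Corollary follows from Theorems~\ref{theorem1} and~\ref{theorem2}.

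\textbf{Step 1: coincidence with the oracle.} By the last part of Theorem~\ref{theorem3}, there is a local minimizer $(\widehat{\balpha}(\lambda),\widehat{\bbeta}(\lambda))$ of $Q_n$ such that the event $E_n=\{(\widehat{\balpha}(\lambda),\widehat{\bbeta}(\lambda))=(\widehat{\balpha}^{or},\widehat{\bbeta}^{or})\}$ satisfies $\mathbb{P}(E_n)\to1$. On $E_n$ every $\widehat{\balpha}_i$ equals $\widehat{\brho}^{or}$. So $\widehat K=1$, the single distinct value is $\widehat{\brho}=\widehat{\brho}^{or}$, and hence $\widehat{\bphi}=\widehat{\bphi}^{or}$ on $E_n$.

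\textbf{Step 2: transfer of the limit law.} Write $T_n=\ba_n^{\top}\bmathcalV_n^{*-1/2}(\widehat{\bphi}-\bphi_0)$ and $T_n^{or}=\ba_n^{\top}\bmathcalV_n^{*-1/2}(\widehat{\bphi}^{or}-\bphi_0)$. For any $x$,
\[
|\mathbb{P}(T_n\le x)-\mathbb{P}(T_n^{or}\le x)|\le \mathbb{P}(E_n^c)\to0 .
\]
Theorem~\ref{theorem3} gives $T_n^{or}\xrightarrow{d}N(0,1)$, using that $\lambda\gg\xi_n'$ with $\lambda$ bounded forces $\xi_n'\to0$. Therefore $T_n\xrightarrow{d}N(0,1)$. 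This is the first claim.

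\textbf{Step 3: the marginal statements.} For $\widehat{\bbeta}$, the joint result is taken with the unit vector $\ba_n$ chosen so that the linear functional reduces to the block quantity $\ba_{n1}^{\top}\bmathcalV_{n11}^{*-1/2}(\widehat{\bbeta}-\bbeta_0)$. Equivalently, I would rerun the oracle linearization $\widehat{\bphi}^{or}-\bphi_0=-\bV_n^{*-1}\Phi_n^{*S}(\bphi_0)+o(\xi_n')$ from the proof of Theorem~\ref{theorem3}. Projecting onto the $\bbeta$-block gives a sum of independent terms whose covariance is the $(1,1)$ block $\bmathcalV^*_{n11}$ of $\bmathcalV^*_n$. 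The Lindeberg condition verified for the full vector implies it for this projection, because projection does not increase norms. Normalizing by $\bmathcalV_{n11}^{*-1/2}$ and applying Lindeberg–Feller then gives the $\bbeta$ result. The $\brho$-block with $\bmathcalV^*_{n22}$ is handled identically. Step 1 again replaces oracle quantities by penalized ones.

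\textbf{Main obstacle.} The delicate point is Step 3. $\bmathcalV_{n11}^{*-1/2}$ is not a sub-block of $\bmathcalV_n^{*-1/2}$, so the marginal result cannot be read off by choosing $\ba_n$ with zeros. I would therefore rely on the projected Lindeberg–Feller argument. To make it work I need $\Lambda_{\min}(\bmathcalV^*_{n11})$ bounded below at rate $n^{-1}$ so the remainder stays negligible; I expect to get this from Condition~\ref{c4*} and the boundedness of the weights under Condition~\ref{c2}.
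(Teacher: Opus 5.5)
Your Steps 1 and 2 match the paper's argument. The paper gives no separate proof here: as with the main-text corollary, it combines the oracle property with the oracle CLT. One small correction: you do not need "$\lambda$ bounded" to get $\xi_n'\to0$. That condition is itself a hypothesis of the normality clause of Theorem~\ref{theorem3}, so it is already part of "the conditions in Theorem~\ref{theorem3}".

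The problem is Step 3. The obstacle you describe is not real, and the fallback you commit to does not work. Zero-padding $\ba_{n1}$ does fail, but the idea in your first sentence of Step 3 works exactly. Put $\bu=\big((\bmathcalV_{n11}^{*-1/2}\ba_{n1})^{\top},\bzero^{\top}\big)^{\top}$ and $\ba_n=\bmathcalV_n^{*1/2}\bu$. Since the upper-left block of $\bmathcalV_n^{*}$ is $\bmathcalV_{n11}^{*}$, you get $\|\ba_n\|^2=\bu^{\top}\bmathcalV_n^{*}\bu=\ba_{n1}^{\top}\ba_{n1}=1$. You also get $\ba_n^{\top}\bmathcalV_n^{*-1/2}(\widehat{\bphi}-\bphi_0)=\bu^{\top}(\widehat{\bphi}-\bphi_0)=\ba_{n1}^{\top}\bmathcalV_{n11}^{*-1/2}(\widehat{\bbeta}-\bbeta_0)$. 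Because $\bmathcalV_n^{*}$ is deterministic, $\ba_n$ is an admissible unit vector, so the $\bbeta$-statement follows at once from the first claim. The $\brho$-statement follows the same way with $\bu=\big(\bzero^{\top},(\bmathcalV_{n22}^{*-1/2}\ba_{n2})^{\top}\big)^{\top}$.

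Your projected Lindeberg--Feller route, by contrast, breaks at the point you flag. The linearization $\widehat{\bphi}^{or}-\bphi_0=-\bV_n^{*-1}\Phi_n^{*S}(\bphi_0)+o(\xi_n')$ only controls the remainder at rate $\xi_n'$. A bound $\Lambda_{\min}(\bmathcalV^{*}_{n11})\gtrsim n^{-1}$ means multiplying by $\bmathcalV_{n11}^{*-1/2}$ inflates this to $o(n^{1/2}\xi_n')$. That need not vanish, since $n^{1/2}\xi_n'\ge n^{r}\sqrt{p+q}\to\infty$. The paper's own derivation of the joint CLT in the proof of Theorem~\ref{theorem1} passes over the same remainder. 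The unit-vector choice lets you use Theorem~\ref{theorem3}'s joint normality as a black box, so you never have to reopen that step.
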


\subsection*{Example 3 (Homogeneous AFT model)}

In this example, we generate the failure times  $T_i$  from the folloiwing homogeneous AFT model
\begin{align*}
	\log(T_i)=z_i\alpha+\bx_i^{\top}\bbeta+\varepsilon_i,\quad i=1,\dots,n,
\end{align*}
where $z_i$, $\bx_i$ and $\bbeta$ are  set in the same way as in Example 1. We set $\alpha=2$ and $n=100$. 
We consider three different cases for generating $\varepsilon_i$:  Case 1 (normal), $\varepsilon_i\stackrel{i . i . d .}{\sim} \mathcal{N}(1,0.2^2)$; Case 2 (heteroscedastic normal), $\varepsilon_i=\Phi(x_{i1})\widetilde{\varepsilon}_{i}$ where  $\Phi(\cdot)$ is the cumulative distribution function of $\mathcal{N}(0,1)$ and $\widetilde{\varepsilon}_{i}\stackrel{i . i . d .}{\sim}N(0,1)$; Case 3 (t-distribution): $\varepsilon_i\stackrel{i . i . d .}{\sim} 0.5t_{(3)}$.
In addition, we generate the censoring time $C_i$ from $\min(c,U(0,c+2))$, where $U(a, b)$ denotes the uniform distribution 
on the interval $(a, b)$ and $c$ controls the censoring rate. 
We set $c$ to yield a censoring rate of approximately $20\%$ and $40\%$, respectively. 
For comparison, we also report the results obtained using the BJ-ADMM method. Tables \ref{tablec1} and \ref{tablec2} present the estimation results for $\widehat{K}$, $\widehat{\balpha}$, and $\widehat{\bbeta}$ based on 100 replications. 
Similar to Examples 1 and 2, it can be seen that our method
RISA-ADMM performs similarly to BJ-ADMM when the error distribution is normal (Case 1), but consistently outperforms BJ-ADMM under heteroscedastic and heavy-tailed error distributions (Cases 2 and 3). 

	%


\begin{table}[!ht]
	\centering
	\caption{The mean, median, and standard deviation (SD) of $\widehat{K}$, as well as the proportion of correctly identifying the 
		number of subgroups (i.e., $P(\widehat{K} = K)$) 
		for BJ-ADMM and RISA-ADMM over 100 replications 
		in Example 3.}
	\label{tablec1}
		\scalebox{0.9}{
			\begin{tabular}{ll*{8}{c}}
				\toprule
				& & \multicolumn{2}{c}{Case 1} & \multicolumn{2}{c}{Case 2} & \multicolumn{2}{c}{Case 3} \\
				\cmidrule(lr){3-4} \cmidrule(lr){5-6} \cmidrule(lr){7-8}
				& Censoring & 20\% & 40\% & 20\% & 40\% & 20\% & 40\% \\
				\midrule
				\multirow{4}{*}{BJ-ADMM} 
				& Mean & 1.03 & 1.06 & 1.26 & 1.29 & 1.63 & 1.76 \\
				& Median & 1 & 1 & 1 & 1 & 1 & 2 \\
				& SD & 0.22 & 0.28 & 0.44 & 0.46 & 0.63 & 0.79 \\
				& $P(\widehat{K}=K)$ & 0.98 & 0.95 & 0.74 & 0.71 & 0.55 & 0.44 \\
				\addlinespace
				
				\multirow{4}{*}{RISA-ADMM} 
				& Mean & 1.00 & 1.02 & 1.03 & 1.04 & 1.12 & 1.15 \\
				& Median & 1 & 1 & 1 & 1 & 1 & 1 \\
				& SD & 0 & 0.14 & 0.17 & 0.20 & 0.33 & 0.36 \\
				& $P(\widehat{K}=K)$ & 1.00 & 0.98 & 0.97 & 0.96 & 0.88 & 0.85 \\
				\bottomrule
		\end{tabular} }
	\end{table}

\begin{table*}[!ht]
	\centering
	\setlength{\tabcolsep}{4.5pt} 
	\footnotesize 
	
	\caption{The mean, median, and standard deviation (SD) of $\widehat{\alpha}$, $\widehat{\alpha}^{\mathrm{or}}$ , and the RMSE of $\widehat{\bbeta}$ for BJ-ADMM and RISA-ADMM over 100 replications in Example 3.}
	\label{tablec2}
	
	\begin{tabular}{lll*{9}{c}}
		\toprule
		& &  
		& \multicolumn{3}{c}{Case 1} 
		& \multicolumn{3}{c}{Case 2} 
		& \multicolumn{3}{c}{Case 3} \\
		\cmidrule(lr){4-6} \cmidrule(lr){7-9} \cmidrule(lr){10-12}
		Method & Estimator & Censoring  & Mean & Median & SD & Mean & Median & SD & Mean & Median & SD \\
		\midrule
		
		\multirow{6}{*}{BJ-ADMM}
		& $\widehat{\alpha}$ & 20\% 
		& 2.005 & 2.005 & 0.023 
		& 2.017 & 2.029 & 0.084 
		& 1.985 & 1.982 & 0.116 \\
		&  & 40\% 
		& 2.006 & 2.006 & 0.026 
		& 2.020 & 2.027 & 0.087 
		& 1.977 & 1.980 & 0.130 \\
		
		& $\widehat{\alpha}^{\mathrm{or}}$ & 20\% 
		& 2.003 & 2.005 & 0.020 
		& 2.012 & 2.015 & 0.081 
		& 1.997 & 1.996 & 0.100 \\
		&  & 40\% 
		& 2.004 & 2.005 & 0.023 
		& 2.016 & 2.015 & 0.080 
		& 1.985 & 1.988 & 0.122 \\
		
		& $\|\widehat{\bbeta}-\bbeta_0\|/\sqrt{p}$ & 20\% 
		& 0.021 & 0.020 & 0.009 
		& 0.063 & 0.060 & 0.034 
		& 0.078 & 0.071 & 0.043 \\
		&  & 40\% 
		& 0.023 & 0.022 & 0.012 
		& 0.073 & 0.071 & 0.040 
		& 0.095 & 0.091 & 0.059 \\
		
		\midrule
		
		\multirow{6}{*}{RISA-ADMM}
		& $\widehat{\alpha}$ & 20\% 
		& 2.005 & 2.009 & 0.026 
		& 2.016 & 2.018 & 0.073 
		& 2.013 & 1.996 & 0.105 \\
		&  & 40\% 
		& 2.006 & 2.008 & 0.029 
		& 2.018 & 2.020 & 0.077 
		& 1.975 & 2.002 & 0.110 \\
		
		& $\widehat{\alpha}^{\mathrm{or}}$ & 20\% 
		& 2.007 & 2.009 & 0.026 
		& 2.016 & 2.013 & 0.070 
		& 1.990 & 1.995 & 0.105 \\
		&  & 40\% 
		& 2.006 & 2.008 & 0.028 
		& 2.018 & 2.020 & 0.075 
		& 1.997 & 1.999 & 0.106 \\
		
		& $\|\widehat{\bbeta}-\bbeta_0\|/\sqrt{p}$ & 20\% 
		& 0.022 & 0.022 & 0.011 
		& 0.055 & 0.061 & 0.031 
		& 0.068 & 0.082 & 0.035\\
		&  & 40\% 
		& 0.026 & 0.025 & 0.014 
		& 0.069 & 0.066 & 0.035 
		& 0.090 & 0.084 & 0.051 \\
		\bottomrule
	\end{tabular} 
\end{table*}

		
			\section*{Appendix D: Example 4 (Sensitivity analysis of $\tau$)}

			In the previous numerical studies, 
			following
			~\citep{huber1981robust,zhao2020statistical}, we set the robustification threshold $\tau$ in the Huber loss to 
			the default value $1.345$, i.e., 
			$\tau = \tau_0 = 1.345$, for simplicity.   
			As suggested by one referee, we investigate the sensitivity of the proposed method to the choice of  $\tau$ by examining its finite-sample numerical performance in estimating the number of subgroups $K$, the regression coefficient vector $\bbeta$, and the subject-specific effects $\brho$ across different values of $\tau$. To this end, we consider an additional simulation example (hereafter referred to as Example 4) in which
			errors satisfy $\epsilon_i \stackrel{i.i.d.}{\sim} 0.5t_{(3)}$ and all settings are identical to those in Example 2. We evaluate our RISA-ADMM method for $\tau = 0.5\tau_0$,  $\tau_0$, and $1.5\tau_0$, respectively.
			The results are presented in Table~\ref{table_tau}, Figure~\ref{fig: RMSE-Example4}, and Table~\ref{table_tau_rmse}.
			
			It can be seen from these results that our RISA-ADMM method is largely insensitive to the robustification threshold $\tau$ when identifying the true number of subgroups, although the estimation accuracy of $\bbeta$ and $\brho$ may be affected by the choice of $\tau$.  In addition, the sensitivity of the estimation accuracy of $\bbeta$ and $\brho$ to the choice of $\tau$ decreases as the sample size $n$ increases. 
			These findings indicate a clear distinction between subgroup identification and parameter estimation accuracy.
			Specifically, as shown in Table~\ref{table_tau}, the median estimated number of subgroups is consistently equal to the true number of subgroups, $3$, for all tested $\tau$ values, censoring levels, and sample sizes, 
			perfectly recovering the true group structure.  Moreover, the mean estimated number of subgroups shows only negligible differences across all settings. 
			However, as illustrated in Figure~\ref{fig: RMSE-Example4} and Table~\ref{table_tau_rmse}, the estimation accuracy of $\bbeta$ and $\brho$ 
			is more sensitive to the choice of $\tau$.
			In particular, when censoring rate is high ($40\%$), the RMSE of both $\widehat{\boldsymbol{\beta}}$ and $\widehat{\boldsymbol{\rho}}$ shows an upward trend as $\tau$ increases from $0.5\tau_0$ to $1.5\tau_0$. This is consistent with the properties of the Huber loss function: a larger $\tau$ increases the range of residuals treated quadratically, which improves efficiency under normal errors but reduces the ``robustness'' against outliers or heavy-tailed distributions in the censored data. 
			Notably, the sensitivity of the estimation accuracy of $\bbeta$ and $\brho$ to $\tau$ is significantly reduced as the sample size increases from $n=100$ to $n=200$.
			
			Although our RISA-ADMM method is insensitive to the robustification threshold $\tau$ 
			for identifying the number of subgroups, it would be of interest to consider a data-driven choice of $\tau$ to further improve the estimation accuracy of the regression coefficients. 
			For example, one could investigate how to extend the idea of choosing a data-adaptive robustification threshold proposed in~\cite{sun2020Adaptive} to our subgroup analysis framework.
			Since the current paper is already lengthy, we leave this topic for future research.

		\begin{table}[!htbp]
			\centering
			\setlength{\tabcolsep}{3.2mm}
			\renewcommand{\arraystretch}{1.15}
			\small
			
			\caption{The mean, median, and standard deviation (SD) of $\widehat{K}$, 
				as well as the proportion of correctly identifying the number of subgroups 
				(i.e., $P(\widehat{K}=K)$) for RISA-ADMM over 100 replications under different $\tau$ values in Example~4.}
			\label{table_tau}
				\scalebox{0.9}{\begin{tabular}{ll*{6}{c}}
						\toprule
						&  & \multicolumn{2}{c}{$0.5\tau_0$} & \multicolumn{2}{c}{$\tau_0$} & \multicolumn{2}{c}{$1.5\tau_0$} \\
						\cmidrule(lr){3-4}\cmidrule(lr){5-6}\cmidrule(lr){7-8}
						& Censoring & 20\% & 40\% & 20\% & 40\% & 20\% & 40\% \\
						\midrule
						\multirow{4}{*}{$n=100$}
						& Mean               & 3.25 & 3.30 & 3.27 & 3.39 & 3.29 & 3.36 \\
						& Median             & 3    & 3    & 3    & 3    & 3    & 3    \\
						& SD                 & 0.52 & 0.64 & 0.65 & 0.72 & 0.69 & 0.63 \\
						& $P(\widehat{K}=K)$ & 0.79 & 0.76 & 0.77 & 0.69 & 0.73 & 0.70 \\
						\addlinespace[4pt]
						\multirow{4}{*}{$n=200$}
						& Mean               & 3.22 & 3.27 & 3.23 & 3.32 & 3.24 & 3.33 \\
						& Median             & 3    & 3    & 3    & 3    & 3   & 3   \\
						& SD                 & 0.48 & 0.55 & 0.55 & 0.58 & 0.59 & 0.60 \\
						& $P(\widehat{K}=K)$ & 0.86 & 0.83 & 0.80 & 0.75 & 0.77 & 0.73 \\
						\bottomrule
				\end{tabular}}
		\end{table}

		\begin{figure}[!htbp]
			\centering
			\includegraphics[width=0.95\linewidth]{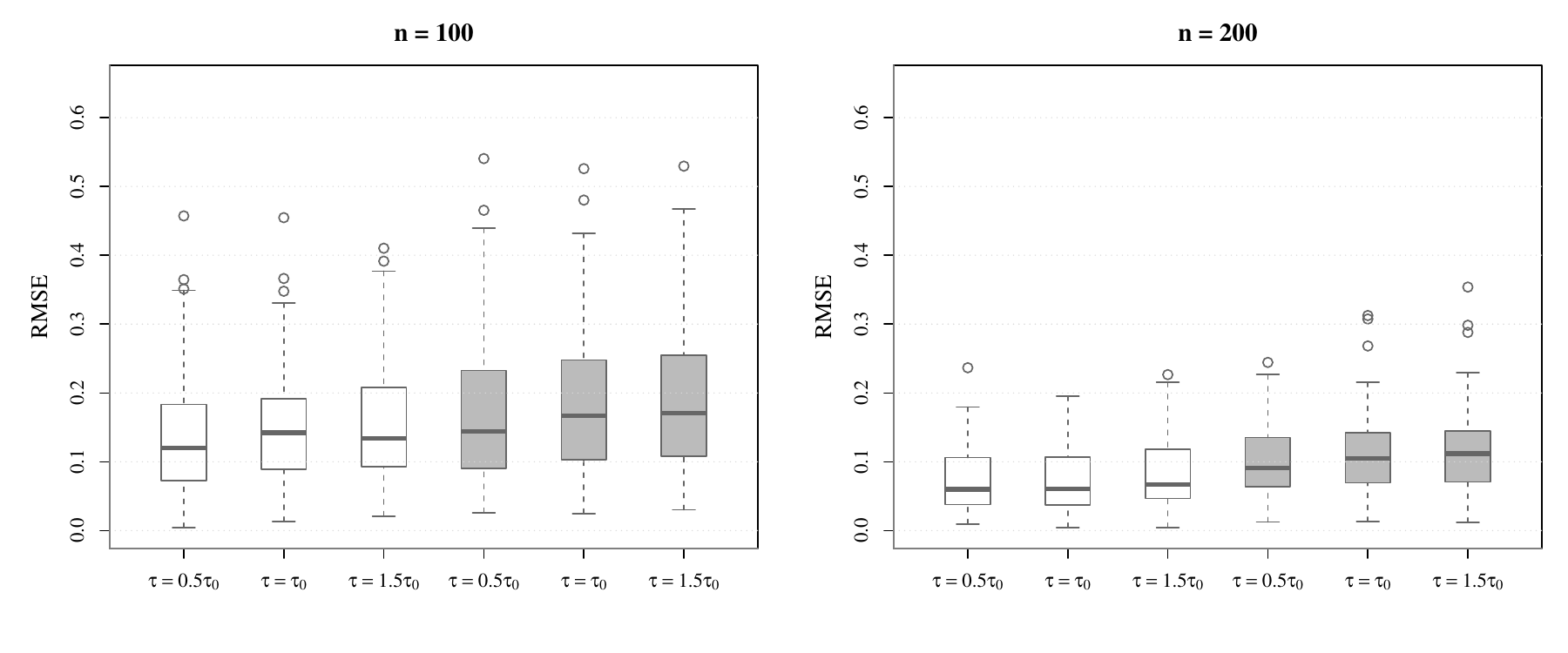}
			\caption{
					Boxplots of the RMSE of $\widehat{\bbeta}$ for RISA-ADMM over 100 replications under different $\tau$ values in Example 4. White: $20\%$ censoring; Grey: $40\%$ censoring.}
			\label{fig: RMSE-Example4}
		\end{figure}

		\begin{table}[!htbp]
			\centering
			\setlength{\tabcolsep}{3.2mm}
			\renewcommand{\arraystretch}{1.15}
			\small
			
			\caption{
					The mean, median, and standard deviation (SD) of the RMSE of $\widehat{\brho}$ for RISA-ADMM over 100 replications under different $\tau$ values in Example~4.}
			\label{table_tau_rmse}
				\scalebox{0.98}{\begin{tabular}{ll*{6}{c}}
						\toprule
						&  & \multicolumn{2}{c}{$0.5\tau_0$} & \multicolumn{2}{c}{$\tau_0$} & \multicolumn{2}{c}{$1.5\tau_0$} \\
						\cmidrule(lr){3-4}\cmidrule(lr){5-6}\cmidrule(lr){7-8}
						& Censoring & 20\% & 40\% & 20\% & 40\% & 20\% & 40\% \\
						\midrule
						
						\multirow{3}{*}{$n=100$}
						& Mean   & 0.271    & 0.281 & 0.243    & 0.294 & 0.263     & 0.302 \\
						& Median & 0.236    & 0.203 & 0.252    & 0.202 & 0.226     & 0.301 \\
						& SD     & 0.198    & 0.205 & 0.132   & 0.219 &  0.208    & 0.230 \\
						
						\addlinespace[4pt]
						
						\multirow{3}{*}{$n=200$}
						& Mean   &  0.160    & 0.173 & 0.158    & 0.202 &  0.175    & 0.221 \\
						& Median &  0.148    & 0.168 & 0.145     & 0.197 & 0.166     & 0.219 \\
						& SD     &  0.070    & 0.061 & 0.061     & 0.074 & 0.066     & 0.077 \\
						
						\bottomrule
				\end{tabular}}
		\end{table}

		\vspace{4mm}
		
			\section*{Appendix E: Example 5 (simulation with more groups and higher dimensions \protect\lowercase{$p$} and \protect\lowercase{$q$})}

			Following the suggestion of one referee, we conduct a simulation study (hereafter referred to as Example 5) with more groups and higher dimensions $p$ and $q$  to further assess the performance of the proposed method. 
			We compare our RISA-ADMM method with BJ-ADMM, the imputation-based approach proposed in~\cite{yan2021subgroup}.
			Specifically, we set $p = 20$, $q = 3$, $K=5$ and $n=300$. The  parameter vector $\bbeta \in \mathbb{R}^p$ is set such that its first $p/2$ components are equal to 1, while its remaining $p/2$ components are equal to $-1$. 
			The subgroup-specific coefficient vector $\brho_i$ is drawn with equal probability from the following five vectors:
			$\brho_1 = (4,4,4)^\top$, 
			$\brho_2 = (2,2,2)^\top$,
			$\brho_3 = (0,0,0)^\top$, $\brho_4=(-2,-2,-2)^{\top}$, and  $\brho_5=(-4,-4,-4)^{\top}$.
			All other simulation settings are the same as those in Example~2, with a censoring rate of $40\%$.
			We compare our RISA-ADMM method with BJ-ADMM. 
			Table \ref{example5-table1} reports the mean, median, and standard deviation of the estimated number of subgroups $\widehat{K}$, as well as the proportion of correctly identifying the true number of subgroups, $\mathbb{P}(\widehat{K}=K)$, based on 100 replications. The estimates of $\bbeta$ and $\brho$ are summarized in Figure \ref{fig: RMSE-Example5} and Table \ref{example5-table2}. 

		\begin{table}[!htbp]
			\centering
			\setlength{\tabcolsep}{3.2mm}
			\renewcommand{\arraystretch}{1.15}
			\small
			\caption{
					The mean, median, and standard deviation (SD) of $\widehat{K}$, as well as the proportion of correctly identifying the number of subgroups (i.e., $P(\widehat{K}=K)$) for BJ-ADMM and RISA-ADMM over 100 replications 
					in Example 5.}
			\label{example5-table1}
			\begin{tabular}{lcccc}
				\hline
				Method & Mean & Median & SD & $P(\widehat{K}=K)$ \\
				\hline
				BJ-ADMM   & 6.42 & 6 & 0.84 & 0.23 \\
				RISA-ADMM & 5.29 & 5 & 0.67 & 0.70 \\
				\hline
			\end{tabular}
		\end{table}

		\begin{figure}[!htbp]
			\centering
			\includegraphics[width=0.43\linewidth]{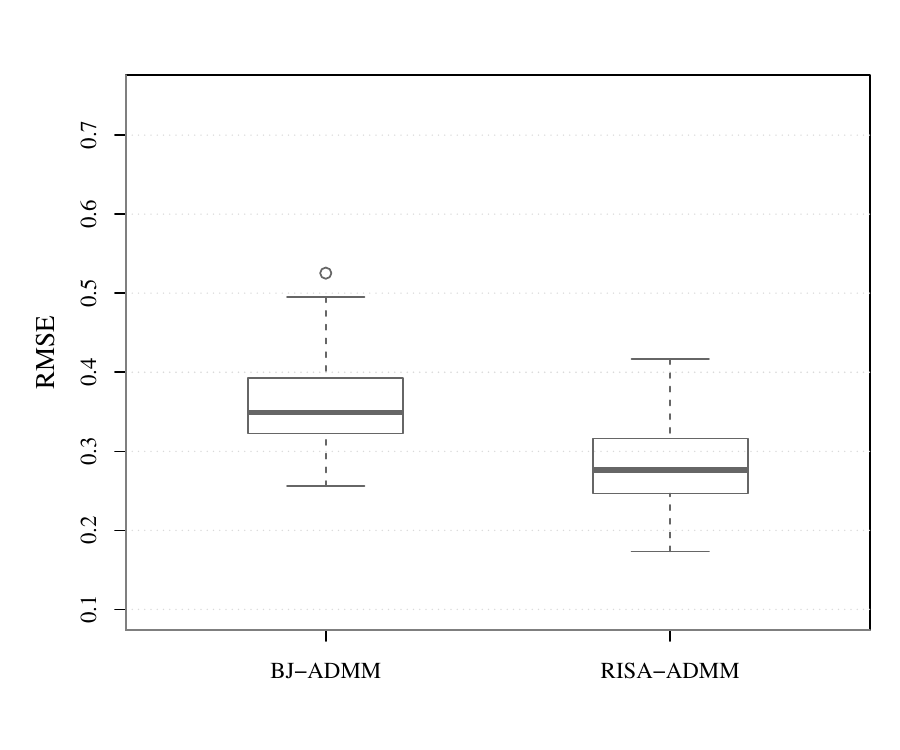}
			\caption{
					Boxplots of the RMSE of $\widehat{\bbeta}$ for BJ-ADMM and RISA-ADMM over 100 replications 
					in Example 5.}
			\label{fig: RMSE-Example5}
		\end{figure}
		
		\begin{table}
			\centering
			\setlength{\tabcolsep}{3.2mm}
			\renewcommand{\arraystretch}{1.15}
			\small
			\caption{
					The mean, median, and standard deviation (SD) of the RMSEs of $\widehat{\brho}$ for BJ-ADMM and RISA-ADMM 
					in Example 5.}
			\label{example5-table2}
			\begin{tabular}{lccc}
				\hline
				Method & Mean & Median & SD \\
				\hline
				BJ-ADMM   & 1.011 & 1.059 & 0.238 \\
				RISA-ADMM & 0.561 & 0.627 & 0.211 \\
				\hline
			\end{tabular}
		\end{table}

			These results further confirm that our method, RISA-ADMM, consistently outperforms BJ-ADMM. 
			In particular, Table \ref{example5-table1} shows that
			RISA-ADMM provides more accurate, stable, and robust estimates of the number of subgroups in this setting. Its mean and median estimates of $\widehat{K}$ are consistently closer to the true value, whereas BJ-ADMM tends to overestimate the number of subgroups and exhibits greater variability. Notably, RISA-ADMM attains a median $\widehat{K} = 5$, matching the true number of subgroups, while BJ-ADMM produces a median estimate of 6 despite the true value being 5. Furthermore, RISA-ADMM achieves substantially higher proportions of correct identification, $\mathbb{P}(\widehat{K} = K)$, than BJ-ADMM, demonstrating its robustness and effectiveness in higher-dimensional settings. In addition, both Figure \ref{fig: RMSE-Example5} and Table \ref{example5-table2} show that RISA-ADMM yields lower mean and median RMSEs than BJ-ADMM for estimating $\widehat{\bbeta}$ and $\widehat{\brho}$.


%
%
%
%

\end{document}